\def\dbE{\mathbb{E}}
\def\dbY{\mathbb{G}}
\def\dbN{\mathbb{N}}
\def\dbR{\mathbb{R}}
\def\dbQ{\mathbb{Q}}
\def\dbY{\mathbb{Y}}
\def\cC{{\cal C}}
\def\cE{{\cal E}}
\def\cF{{\cal F}}
\def\cI{{\cal I}}
\def\cK{{\cal K}}
\def\cP{{\cal P}}
\def\cR{{\cal R}}
\def\cS{{\cal S}}
\def\cW{{\cal W}}
\def\g{\gamma}
\def\d{\delta}
\def\e{\varepsilon}
\def\k{\kappa}
\def\l{\lambda}
\def\si{\sigma}
\def\f{\varphi}
\def\th{\theta}
\def\o{\omega}
\def\D{\Delta}
\def\Si{\Sigma}
\def\Th{\Theta}
\def\eps{\epsilon}
\def\sfm{{\mathsf m}}
\def\cd{\cdot}
\def\cds{\cdots}
\def\bp{\bar\pi}
\def\sfy{{\mathsf y}}
\def\rc{{\rm rc}}
\def\syc{{\rm sc}}
\def\ol{\overline}
\def\q{\quad}
\def\no{\noindent}
\def\ms{\vspace{3mm}}
\DeclareMathOperator*{\argmin}{\arg\!\min}
\DeclareMathOperator*{\argmax}{\arg\!\max}
\newcommand{\be}{\begin{equation}}
\newcommand{\ee}{\end{equation}}
\newcommand{\bee}{\begin{equation*}}
\newcommand{\eee}{\end{equation*}}
\newcommand{\bea}{\begin{eqnarray}}
\newcommand{\eea}{\end{eqnarray}}
\newcommand{\beaa}{\begin{eqnarray*}}
\newcommand{\eeaa}{\end{eqnarray*}}
\crefname{hypothesis}{Hypothesis}{Hypotheses}
\title{Game on Random Environment,  Mean-field Langevin System and Neural Networks}
\author{Giovanni CONFORTI \thanks{Centre de Math\'ematiques Appliqu\'ees, Ecole Polytechnique, IP Paris, 91128 Palaiseau Cedex, France (\email{giovanni.conforti@polytechnique.edu}).}
\and Anna KAZEYKINA \thanks{Laboratoire de Math\'ematiques d'Orsay, Universit\'e
Paris-Saclay, 91405 Orsay, France 
  (\email{anna.kazeykina@math.u-psud.fr}).}
\and Zhenjie REN \thanks{Ceremade, Universit\'e Paris-Dauphine, PSL Research University, 75016 Paris, France (\email{ren@ceremade.dauphine.fr}).}}
\begin{document}

\maketitle

\begin{abstract}
In this paper we study a type of games regularized by the relative entropy, where the players' strategies are coupled through a random environment variable. Besides the existence and the uniqueness of equilibria of such games, we prove that the marginal laws of the corresponding mean-field Langevin systems can converge towards the games' equilibria in different settings. As applications, the dynamic games can be treated as games on a random environment when one treats the time horizon as the environment. In practice, our results can be applied to analysing the stochastic gradient descent algorithm for deep neural networks in the context of supervised learning as well as for the generative adversarial networks. 
\end{abstract}

\begin{keywords}
  Langevin dynamics, game theory, neural networks
\end{keywords}

\begin{AMS}
  60H30, 37M25, 91A40
\end{AMS}

\section{Introduction}

The approximation of the equilibria is at the heart of the game theory. The classic literature introduces a natural relation connecting the equilibria of the games and the optima of the sequential decision problems.  This leads to a fruitful research on the topics such as approachability, regret and calibration, see the survey by V. Perchet \cite{perchet14} and the books by N. Cesa-Bianchi and G. Lugosi \cite{CL06} and by D. Fudenberg and D. K. Levine \cite{FL98}. In particular, the gradient-based strategy often plays a crucial rule in approximating the equilibria. In the present paper we study the analog to the gradient-based strategy in the continuous-time setting, namely, we aim at approximating the equilibria of games using the diffusion processes encoded with the gradients of potential functions. 

Consider a game with $n$ players. The mixed Nash equilibrium is defined to be a collection of probability measures $(\nu^{*,i})_{i=1,\cds, n}$ such that
\be\label{nashclassic}
\nu^{*, i} \in \argmin_{\nu^i} \int f^i(x^1 ,\cds, x^n) \nu^i(dx^i) \prod_{j\neq i}\nu^{*,j}(dx^j),
\ee
which means that each player can no longer improve his performance  by making a unilateral change of strategy. Note that in this classical setting, the potential function of each player 
\bee
\nu^i\mapsto  F^i\big(\nu^i, (\nu^j)_{j\neq i}\big):=\int f^i(x^1 ,\cds, x^n) \nu^i(dx^i) \prod_{j\neq i}\nu^{j}(dx^j)
\eee
is linear. In this paper we shall allow the potential function to be {\it nonlinear} in view of the applications, in particular, to the neural networks (see \cref{sec:app}).
As another generalization to  the classic theory, we consider games on a {\it random environment}. Introduce a space of environment $\dbY$ and fix a probability measure $\sfm$ on it. We urge each player to choose a strategy among the probability measures $\nu^i$ on the product space $\dbR^{n^i}\times \dbY$ such that the marginal law of $\nu^i$ on $\dbY$, $\nu^i_Y$, matches the fixed distribution $\sfm$. Typically, in our framework we  consider the game of which the Nash equilibrium is  a collection of probability measures $(\nu^{*,i})_{i=1,\cds, n}$ on the product spaces such that
\be\label{nashcor}
\nu^{*, i} \in \argmin_{\nu^i : \nu^i_Y=\sfm} \int f^i(x^1 ,\cds, x^n, \sfy) \nu^i(dx^i|\sfy) \prod_{j\neq i}\nu^{*,j}(dx^j|\sfy) \sfm(d\sfy) +\frac{\si^2 }{2} H(\nu^i | {\rm Leb}\times \sfm ),
\ee 
where $\nu(\cd|\sfy)$ denotes the conditional probability given $\sfy$, and we add the relative entropy $H$ as a regularizer.
In contrast to the conventional definition of the Nash equilibrium \cref{nashclassic}, where the players' strategies are uncorrelated, in our setting the strategies of the players are allowed to be coupled through the environment. Moreover,  the general framework of the present paper goes beyond the particular game \cref{nashcor}, by allowing the cost function to be nonlinear in $(\nu^i)_{i=1,\cds, n}$. As an application,  we observe (\cref{eg:dynamicgame}) that relaxed dynamic games can be viewed as games on random environment, where the environment $\dbY$ is the time horizon.

One of our main contributions is the first order condition of the optimization on the probability space given a marginal constraint (\cref{thm:FOC}), which naturally provides a necessary condition for being a Nash equilibrium of a game on  random environment (\cref{cor:NC_Nash}). This result is a generalization to the first order condition in Proposition 2.4 in \cite{HRSS19} for the optimization on the probability space without marginal constraint. The key ingredient for this analysis is the linear functional derivative $\frac{\d F}{\d \nu}$, first introduced for the variational calculus and recently popularized by the study on the mean-field games, see e.g. Cardaliaguet et al. \cite{cardaliaguet2015master}, Delarue et al. \cite{DLR19, DLR19II}, Chassagneux et al. \cite{CLTse19}. Roughly speaking, we prove that 
\begin{multline}\label{intro:FOC}
\mbox{if $\nu^*\in \argmin_{\nu: \nu_Y =\sfm} F(\nu) + \frac{\si^2}{2}H(\nu| {\rm Leb}\times \sfm)$,}\\
\mbox{then $\nabla_x  \frac{\d F}{\d \nu}(\nu^*, x, \sfy) + \frac{\si^2 }{2} \nabla_x \ln \nu^*(x|\sfy) = 0$, for all $x$, $\sfm$-a.s. $\sfy$. }
\end{multline}
Besides the first order condition for the Nash equilibrium, we also provide sufficient conditions on the linear functional derivative so that the game on a random environment admits a (unique) equilibrium.

The first order equation in \cref{intro:FOC}  clearly links the minimizer $\nu^*$ (or the Nash equilibrium in the context of games) to the invariant measure  of a system of  diffusion processes, see \cref{eq:MFL} below. Since the dynamics of the diffusion processes depends on their marginal distributions (in other word, McKean-Vlasov diffusion, see \cite{Meleard96, Sznit89}) and involves the gradients of the potential functions, we name the system mean-field Langevin  (MFL) system. Further, we study the different settings where the marginal laws of the MFL system converge to the unique invariant measure, which, due to the first order condition, must coincide with the Nash equilibrium of the game on random environment. 
In the case with small dependence on the marginal laws, we prove the exponential ergodicity of the MFL system, using the reflection coupling (see \cite{eberle11, eberle2019quantitative}). 
In the  case of one-player game (in other word, optimization), once the potential function is convex, we use an argument, similar to that in \cite{HRSS19, HKR19}, based on the Lasalle's invariant principle to prove the (non-exponential) ergodicity of the MFL system under  mild conditions on the coefficients. 

In view of applications, our result can be used to justify the applicability of the gradient descent algorithm for training (deep) neural networks. As mentioned in \cite{JSS19, mei2018mean, mei2019mean, HRSS19, HKR19}, the supervised learning with (deep) neural networks can be viewed as a minimization problem (or optimal control problem in the context of deep learning) on the space of probability measures, and the gradient descent algorithm is approximately a discretization of the corresponding mean-field Langevin dynamics. The present paper provides a more general framework for such studies. In particular, it is remarkable that in \cref{sec:app} we provide a theoretically convergent numerical scheme for the generative adversarial networks (GAN) as well as a way to characterize the training error. 

The rest of the paper is organized as follows. In \cref{sec:notation} we introduce the definitions of a game on a random environment and the corresponding MFL system.  In \cref{sec:main}  the main theorems of the paper are stated without proofs. In \cref{sec:app} we present the applications to the dynamic games and the GAN. Finally in \cref{sec:proof} we present the proofs of the main theorems.

\section{Notation and definitions}\label{sec:notation}

%
%

\subsection{Preliminary}

Denote by $\dbY$ the space of environment, and assume $\dbY$ to be Polish. Throughout the paper, we fix a probability measure $\sfm$ on $\dbY$. Define the product space $\bar \dbR^d:= \dbR^d\times \dbY$ for $d\in \dbN$.  In this paper we consider a game in which the players choose strategies among $\Pi:=\{\bp\in \cP_p(\bar\dbR^d): \bp(\dbR^d,\cd) = \sfm \}$, where $\cP_p(\bar\dbR^d)$ is the space of probability measures on $\bar\dbR^d$ with finite $p$-moments for some $p\ge 1$.  We say that a function $F:\Pi\rightarrow \dbR$ is in $\cC^1$ if there exists a  function $\frac{\d F}{\d \nu}: \Pi\times \bar\dbR^d \rightarrow \dbR$ such that for all $\nu,\nu'\in \Pi$
\be\label{C1derv}
F(\nu') - F(\nu) = \int_0^1 \int_{\bar\dbR^d} \frac{\d F}{\d \nu}\big((1-\l)\nu+\l \nu',  \bar x\big)(\nu'-\nu)(d \bar x)d\l.
\ee
We will refer to $\frac{\d F}{\d \nu}$ as the linear functional derivative.
There is at most one $\frac{\d F}{\d \nu}$, modulo a constant, satisfying \cref{C1derv}.

 Here is the basic assumption we apply throughout the paper.

\begin{assum}[basic assumption]\label{assum:base}
Assume that for some $p\ge 1$, the function $F: \Pi\rightarrow \dbR$ belongs to $\cC^1$ and
\begin{itemize}
\item $F$ is $\cW_p$-continuous, where $\cW_p$ stands for the $p$-Wasserstein distance;

\item $\frac{\d F}{\d \bar\pi}: (\bar\pi, x, \sfy)\in\Pi\times\dbR^d\times\dbY\mapsto \frac{\d F}{\d \bar\pi}(\bar\pi, x, \sfy) \in \dbR$ is $\cW_p$-continuous in $\bp$ and  continuously differentiable in $x$;


\item $\frac{\d F}{\d \bar\pi}$ is of $p$-polynomial growth in $\bar x= (x, \sfy)$, that is, $\sup_{\bp\in \Pi}\big| \frac{\d F}{\d \bar\pi}(\bp, \bar x) \big| \le C(1+|\bar x|^p)$.
\end{itemize}
\end{assum}

\begin{remark}
 Since in our setting the law on the environment $\dbY$ is fixed, by disintegration we may identify a distribution 
$\bar \pi\in \Pi$
with the probability measures $\big(\pi(\cdot | \sfy)\big)_{\sfy\in\dbY}\subset \cP_p(\dbR^d)$ such that
$\bar \pi(d\bar x) = \pi(dx | \sfy) \sfm(d\sfy)$.
\end{remark}

\subsection{Game on random environment}

In this paper, we consider a particular game in which the strategies of the $n$ players are correlated through the random environment (or signal) $\dbY$. 
 Let $n^i\in \dbN$ for $i=1,\cdots, n$ and $N:=\sum_{i=1}^n n^i$.  
 As mentioned before, the $i$-th player chooses his strategy (a probability measure) among $\Pi^i:=\{ \nu\in \cP_p(\bar\dbR^{n^i}) : \nu(\dbR^{n^i} ,\cd) = \sfm \}$, while the joint distribution of the other players' strategies belongs to the space $\Pi^{-i}:=\{ \nu\in \cP_p(\bar\dbR^{N-n^i}) : \nu(\dbR^{N-n^i} ,\cd) = \sfm \}$.
 The $i$-th player aims at optimizing his objective function $F^i: \Pi^i\times\Pi^{-i}\rightarrow \dbR$. More precisely, he faces the optimization:
\bee
\mbox{given $\mu\in \Pi^{-i}$},\q\q\q
\mbox{solve}\q\inf_{\nu\in \Pi^i} F^i(\nu, \mu).
\eee
In this paper, we are more interested in solving a regularized version of the game above. We use the relative entropy with respect to ${\rm Leb}^{n^i}\times\sfm$, denoted by $H^i$, as the regularizer. Namely, given $\mu\in \Pi^{-i}$ the $i$-th player solves:
\bee
\q\inf_{\nu\in \Pi^i} V^i(\nu, \mu), \q V^i(\nu, \mu):= F^i(\nu, \mu) +\frac{\si^2}{2} H^i(\nu)\q\mbox{for some $\si>0$.}
\eee
For $\bar\pi\in \Pi$, we denote by $\bar\pi^i\in \Pi^i$ its marginal distribution on $\bar\dbR^{n^i}$, and by $\bar\pi^{-i}\in \Pi^{-i}$ the marginal distribution on  $\bar\dbR^{N-n^i}$.

\begin{definition}\label{defn:nash}
A probability measure $\bar\pi \in \Pi$ is a Nash equilibrium of this game, if 
\bee
\bar\pi^i \in \arg\min_{\nu\in \Pi^i} V^i(\nu, \bar\pi^{-i}), \q\q\mbox{for all}\q i=1,\cdots, n.
\eee
\end{definition}

\begin{eg}\label{eg:dynamicgame}
To have a concrete example of games on random environment, we refer to the dynamic games, both discrete-time and continuous-time models. In the discrete-time case, let $\dbY:=\{1, \cdots, T\}$ for some $T\in \dbN$ and  $\sfm$ be the uniform distribution on $\dbY$. Define the controlled dynamics:
\bee
\Theta_\sfy^i =   \f^i_\sfy (\Theta_{\sfy-1}^i, \pi^i(\cd| \sfy), \bar \pi^{-i}), \q\mbox{where}\q \bar\pi^i(\cdot, \sfy) =  \pi^i(\cd|\sfy) \sfm(\sfy), \q\mbox{for}\q \sfy\in \dbY.
\eee
If the $n$ players minimize the objective functions of the form $f^i\big((\Theta^i_\sfy)_{\sfy\in \dbY}\big)$ by choosing the strategy $\bar\pi^i$, then the game fits the framework of this paper. 

Similarly for the continuous-time model, consider the space $\dbY:=[0,T]$ for $T\in \dbR$ and let $\sfm$ be the uniform distribution on the interval. Define the continuous-time dynamics:
\bee
d\Theta^i_\sfy = \f^i \big( \pi^i(\cd|\sfy), \bar \pi^{-i}, \Theta_\sfy^i, \sfy\big)d\sfy, \q\mbox{where}\q \bar\pi^i (\cdot, d \sfy) = \pi^i(\cd|\sfy) \sfm(d\sfy), \q\mbox{for}\q \sfy\in \dbY.
\eee
If the $n$ players  minimize the objective functions of the form $f^i\big((\Theta^i_\sfy)_{\sfy\in \dbY}\big)$ by choosing the strategy $\bar\pi^i$, then this game also fits in the  framework discussed above.
\end{eg}

\subsection{Mean-field Langevin system}

For any fixed $\mu\in \Pi^{-i}$, we assume that $F^i(\cdot, \mu): \nu\in \Pi^i \mapsto F^i(\nu,\mu)\in  \dbR$ satisfies \cref{assum:base}. The linear derivative is denoted by $ \frac{\d F^i}{\d \nu}(\cd, \mu, \cd):  (\nu, \bar x^i) \mapsto  \frac{\d F^i}{\d \nu} (\nu , \mu, \bar x^i)$, with $\bar x^i= (x^i, \sfy)\in \bar\dbR^{n^i}$.  In order to compute Nash equilibria of the game on random environment,  we are interested in the following mean-field Langevin (MFL) dynamics:
\be\label{eq:MFL}
d X^i_t = - \nabla_{x^i} \frac{\d F^i}{\d \nu} (\bar\pi^i_t , \bar\pi^{-i}_t, X^i_t, Y)dt + \si dW^i_t, \q\mbox{for}\q i=1,\cdots,n,
\ee
where $W=(W^i)_i$ is an $N$-dimension Brownian motion, $Y$ is a random variable taking values in $\dbY$ and satisfying the law $\sfm$, and $\bar\pi_t := {\rm Law}\bar X_t$ with $\bar X_t :=(X^1_t, \cdots, X^n_t, Y)$. 
 In this paper we will discuss the relation between the MFL dynamics and the Nash equilibrium of the game on the random environment. 

\begin{remark}
Here are some important observations:
\begin{itemize}

\item The random variable $Y$ plays the role of parameter in the MFL system. This leads us to study the system:
\be\label{eq:MFL_sys}
d X^\sfy_t = - \left(\nabla_{x^i} \frac{\d F^i}{\d \nu} (\bar\pi^i_t , \bar\pi^{-i}_t, X^{\sfy,i}_t, \sfy)\right)_{i=1,\cds,n}dt
		 + \si dW_t, \q\mbox{for}\q  \sfm\mbox{-a.s.}~\sfy\in \dbY.
\ee
Formally, the marginal laws of the MFL system above with a fixed $y\in \dbY$ satisfy the following system of Fokker-Planck equations:
\begin{multline}
\label{eq:FP}
\partial_t \pi^i(\cd|\sfy) = \nabla_{x^i}\cdot \left(\nabla_{x^i} \frac{\d F^i}{\d \nu}(\bar\pi^i, \bar\pi^{-i}, \cdot, \sfy)\pi^i(\cd|\sfy)   +\frac{\si^2}{2} \nabla_{x^i} \pi^i(\cd|\sfy)\right), \q \\
\mbox{for all $i=1, \cds,n$, $\sfm$-a.s. $\sfy\in \dbY$.}
\end{multline}

\item For fixed $\sfy\in \dbY$, the dynamic systems for $\big(X^i(\sfy)\big)_i$ are only weakly coupled through the marginal distributions.

\item Although we name the system after Langevin, the drift term of the dynamics of the aggregated vector $ (X^i)_{i=1,\cdots, n}$ is in general not in the form of the gradient of a potential function.
\end{itemize}
\end{remark}

\section{Main results}\label{sec:main}

\subsection{Optimization with marginal constraint}

One of our observations  is the following first order condition of the optimization over the probability measures with marginal constraint. 

\begin{theorem}[first order condition]\label{thm:FOC}
 Let $F: \Pi\rightarrow \dbR$ satisfy \cref{assum:base}. Define $V(\bar\pi):=F(\bar\pi) + \eta H(\bar\pi)$ for some $\eta>0$.  If $\bar\pi^*\in \argmin_{\bar\pi\in \Pi} V(\bar\pi)$, then
\be\label{eq:necessary}
\nabla_x \frac{\d F}{\d \bar\pi}(\bar\pi^*, \cd,\sfy) + \eta \nabla_x\ln\big(\pi^*(\cd|\sfy)\big) =0\q\mbox{for $\sfm$-a.s. $\sfy$.}
\ee
Conversely, if we additionally assume that $F$ is convex, then $\bp^*\in \Pi$ satisfying \cref{eq:necessary} implies 
$\bar\pi^*\in \argmin_{\bar\pi\in \Pi} V(\bar\pi)$.
\end{theorem}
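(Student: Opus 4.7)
The plan is to prove the forward direction by perturbing $\bar\pi^*$ via pushforwards that act solely on the $x$-variable, so the marginal constraint $\bar\pi(\dbR^d,\cdot)=\sfm$ is automatically preserved. For a smooth vector field $\xi(x,\sfy)$ with compact support in $x$ and bounded measurable dependence on $\sfy$, set $T_\e(x,\sfy):=(x+\e\xi(x,\sfy),\sfy)$ and $\bar\pi_\e:=(T_\e)_\#\bar\pi^*\in\Pi$. Optimality gives $\frac{d}{d\e}V(\bar\pi_\e)\big|_{\e=0}=0$, reducing the whole problem to identifying the two directional derivatives.

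Applying \cref{C1derv} along the segment between $\bar\pi^*$ and $\bar\pi_\e$ and then letting $\e\to 0$ (a limit controlled by the $\cW_p$-continuity and polynomial growth in \cref{assum:base}), I obtain
\begin{equation*}
\frac{d}{d\e}F(\bar\pi_\e)\Big|_{\e=0}=\int_{\bar\dbR^d}\nabla_x\frac{\d F}{\d \bar\pi}(\bar\pi^*,\bar x)\cdot \xi(x,\sfy)\,\bar\pi^*(d\bar x).
\end{equation*}
For the entropy term, writing $\bar\pi^*(d\bar x)=\pi^*(x|\sfy)\,dx\,\sfm(d\sfy)$ and using the Jacobian change of variables $\pi_\e^*(T_\e(x,\sfy)|\sfy)=\pi^*(x|\sfy)/\det(I+\e\nabla_x\xi(x,\sfy))$, an integration by parts in $x$ yields
\begin{equation*}
\frac{d}{d\e}H(\bar\pi_\e)\Big|_{\e=0}=\int_{\bar\dbR^d}\nabla_x\ln\pi^*(x|\sfy)\cdot \xi(x,\sfy)\,\bar\pi^*(d\bar x).
\end{equation*}
Arbitrariness of $\xi$ forces $\nabla_x\frac{\d F}{\d \bar\pi}(\bar\pi^*,\bar x)+\eta\nabla_x\ln\pi^*(x|\sfy)=0$ pointwise $\bar\pi^*$-a.e.; moreover, integrating this relation in $x$ shows that $\pi^*(\cdot|\sfy)$ is Gibbsian with density proportional to $\exp(-\eta^{-1}\frac{\d F}{\d \bar\pi}(\bar\pi^*,\cdot,\sfy))$, hence strictly positive in $x$, so the relation upgrades to the pointwise statement \cref{eq:necessary}.

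For the converse, assume $F$ is convex and $\bar\pi^*$ satisfies \cref{eq:necessary}. Integration in $x$ furnishes a measurable $c:\dbY\to\dbR$ with $\frac{\d F}{\d \bar\pi}(\bar\pi^*,x,\sfy)=-\eta\ln\pi^*(x|\sfy)+c(\sfy)$, $\sfm$-a.s. For any $\bar\pi\in\Pi$, convexity of $F$ together with \cref{C1derv} gives $F(\bar\pi)-F(\bar\pi^*)\ge\int\frac{\d F}{\d \bar\pi}(\bar\pi^*,\bar x)(\bar\pi-\bar\pi^*)(d\bar x)$. Substituting the above identification and exploiting that the $c(\sfy)$-contribution cancels because $\bar\pi$ and $\bar\pi^*$ share the marginal $\sfm$ on $\dbY$, the bound reduces to $F(\bar\pi)-F(\bar\pi^*)\ge -\eta\int\ln\pi^*(x|\sfy)(\bar\pi-\bar\pi^*)(d\bar x)$. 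Adding $\eta[H(\bar\pi)-H(\bar\pi^*)]$, the two $\int\ln\pi^*\,d\bar\pi^*$ contributions cancel and the sum telescopes into
\begin{equation*}
V(\bar\pi)-V(\bar\pi^*)\ \ge\ \eta\int\ln\frac{\pi(x|\sfy)}{\pi^*(x|\sfy)}\,\bar\pi(d\bar x)\ =\ \eta\,H(\bar\pi\,|\,\bar\pi^*)\ \ge\ 0,
\end{equation*}
proving that $\bar\pi^*$ minimizes $V$.

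The main technical obstacle lies in the forward direction: I must establish a priori that any minimizer admits a strictly positive and sufficiently smooth conditional density $\pi^*(\cdot|\sfy)$ so that $\nabla_x\ln\pi^*(\cdot|\sfy)$ is meaningful, and justify the interchange of differentiation and integration in $H(\bar\pi_\e)$ uniformly in small $\e$. Finiteness of $H(\bar\pi^*)$ (forced by $V(\bar\pi^*)<\infty$) together with the polynomial growth in \cref{assum:base}, and the Gibbs-type conclusion read off from the FOC itself, should suffice to close these estimates rigorously.
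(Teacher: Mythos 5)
Your sufficiency argument is essentially the paper's (you use the exact identity $V(\bp)-V(\bp^*)\ge \eta H(\bp|\bp^*)$ where the paper uses convexity of $z\mapsto z\ln z$; both are fine, and the cancellation of the $\sfy$-dependent constant via the shared marginal $\sfm$ is exactly the point). For the necessary condition, however, you take a genuinely different route -- pushforward (displacement) perturbations $T_\e(x,\sfy)=(x+\e\xi(x,\sfy),\sfy)$ instead of the paper's convex combinations $\e\bp+(1-\e)\bp^*$ -- and as written this route has a circularity. Your formula $\frac{d}{d\e}H(\bp_\e)\big|_{\e=0}=\int\nabla_x\ln\pi^*(x|\sfy)\cdot\xi\,\bp^*(d\bar x)$ is obtained by integrating $-\int \mathrm{div}_x\xi\,d\bp^*$ by parts, which presupposes that $\pi^*(\cdot|\sfy)$ is weakly differentiable and positive; but that regularity is exactly what you propose to ``read off from the FOC itself'' at the end. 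You cannot use the conclusion to license the step that derives it. The fix is to \emph{not} integrate by parts: keep the first variation of the entropy in divergence form, so that optimality yields the distributional identity
\begin{equation*}
\int \nabla_x\frac{\d F}{\d \bp}(\bp^*,\bar x)\cdot\xi(x,\sfy)\,\bp^*(d\bar x)\;=\;\eta\int \mathrm{div}_x\,\xi(x,\sfy)\,\bp^*(d\bar x)\qquad\text{for all admissible }\xi,
\end{equation*}
i.e.\ $\eta\nabla_x\pi^*(\cdot|\sfy)=-\pi^*(\cdot|\sfy)\nabla_x\frac{\d F}{\d\bp}(\bp^*,\cdot,\sfy)$ in the sense of distributions; testing against $e^{\eta^{-1}\frac{\d F}{\d\bp}(\bp^*,\cdot,\sfy)}\psi$ then shows $\pi^*(\cdot|\sfy)e^{\eta^{-1}\frac{\d F}{\d\bp}(\bp^*,\cdot,\sfy)}$ has vanishing distributional gradient, hence is constant, which delivers positivity, smoothness and \cref{eq:necessary} in one stroke. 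With that reordering your argument closes, and it is arguably more self-contained than the paper's three-step proof (variational inequality, then constancy of $\Xi_\sfy$ via tailored densities, then equivalence to ${\rm Leb}\times\sfm$ by contradiction), since the Gibbs form falls out automatically rather than being assembled from Steps 2--3.

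A second, smaller gap: to compute $\frac{d}{d\e}F(\bp_\e)\big|_{\e=0}$ you must pass to the limit in $\int_0^1\int\nabla_x\frac{\d F}{\d\bp}(\bp^\l_\e,x+s\e\xi,\sfy)\cdot\xi\,ds\,\bp^*(d\bar x)\,d\l$, which requires (local uniform) continuity of $\nabla_x\frac{\d F}{\d\bp}$ \emph{jointly} in $(\bp,x)$. \cref{assum:base} only grants $\cW_p$-continuity in $\bp$ and continuous differentiability in $x$ separately, and polynomial growth of $\frac{\d F}{\d\bp}$ itself, not of its gradient. The paper's convex-combination perturbation deliberately sidesteps this: it only ever evaluates $\frac{\d F}{\d\bp}$ (not its gradient) along the segment, and differentiates in $x$ only at the fixed measure $\bp^*$ at the very end. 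So either add a joint-continuity hypothesis or restructure as the paper does. Finally, to get the conclusion for $\sfm$-a.s.\ $\sfy$ you should take $\xi(x,\sfy)=\phi(x)\mathbf{1}_A(\sfy)$ over a countable family of test functions $\phi$ and measurable $A\subset\dbY$; this is routine but should be said.
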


\begin{remark}
We remark that 
\begin{itemize}
\item  the regularizer $H(\bp)$ plays an important role for the proof of the necessary condition. Without it, for  $\bar\pi^*\in \argmin_{\bar\pi\in \Pi} V(\bar\pi)$  we can only conclude that there is a measurable function $f:\dbY\rightarrow\dbR$ such that
\bee
\frac{\d F}{\d \bar\pi}(\bar\pi^*, x,\sfy) =f(\sfy) ,\q \bp^*\mbox{-a.s.};
\eee

\item for the readers more interested in the minimization of the unregularized potential function $F$, by standard argument (see e.g. \cite[Proposition 2.3]{HRSS19}) one may prove that under mild conditions the minimum of $F+\eta H$ converges to the minimum of $F$ as $\eta\rightarrow 0$.

\end{itemize}
\end{remark}

\subsection{Equilibria of games on random environment}

 Applying the first order condition above to the context of the game on  random environment, we immediately obtain the following necessary condition for the Nash equilibria.

\begin{corollary}[Necessary condition for Nash equilibria]
\label{cor:NC_Nash}
For $i=1,\cdots, n$ and $\mu\in \Pi^{-i}$, let $F^i(\cdot, \mu): \nu \in\Pi^i\mapsto F^i(\nu, \mu)\in \dbR$  satisfy \cref{assum:base}. If  $\bar\pi\in \Pi$ is a Nash equilibrium, we have for  $i=1,\cdots,n,$
\be\label{eq:FOCeq}
\nabla_{x^i} \frac{\d F^i}{\d \nu}(\bar\pi^i, \bar\pi^{-i}, x^i,\sfy) + \frac{\si^2}{2} \nabla_{x^i}\ln\big(\pi^i(x^i|\sfy)\big) =0\q\mbox{for all $x^i\in \dbR^{n^i}$ and $\sfm$-a.s. $\sfy\in \dbY$}.
\ee
\end{corollary}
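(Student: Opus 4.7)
The proof is essentially a direct application of Theorem~\ref{thm:FOC} to each player's individual optimization problem, so the plan is short and the main task is just to verify that the hypotheses carry over.

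First, I would fix an index $i \in \{1, \dots, n\}$ and freeze the other players' aggregated strategy at $\mu := \bar\pi^{-i} \in \Pi^{-i}$. By the definition of a Nash equilibrium (\cref{defn:nash}), the marginal $\bar\pi^i$ satisfies
\bee
\bar\pi^i \in \argmin_{\nu\in \Pi^i} \Bigl\{ F^i(\nu, \bar\pi^{-i}) + \tfrac{\si^2}{2} H^i(\nu) \Bigr\}.
\eee
This is precisely a constrained optimization of the form considered in \cref{thm:FOC}, but on the space $\Pi^i$ instead of $\Pi$ (i.e.\ the marginal on $\dbY$ must equal $\sfm$), with objective $F^i(\,\cdot\,, \mu)$ and regularization parameter $\eta = \si^2/2$.

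Next I would check that the hypotheses of \cref{thm:FOC} are met for $F^i(\,\cdot\,, \mu)$. This is exactly what the corollary's statement assumes: for every $\mu \in \Pi^{-i}$, the map $\nu \mapsto F^i(\nu, \mu)$ satisfies \cref{assum:base}, and its linear functional derivative in the first argument is by definition $\nu \mapsto \frac{\d F^i}{\d \nu}(\nu, \mu, \bar x^i)$. Applying \cref{thm:FOC} to this one-argument functional, the necessary condition \cref{eq:necessary} becomes
\bee
\nabla_{x^i}\frac{\d F^i}{\d \nu}(\bar\pi^i, \bar\pi^{-i}, x^i, \sfy) + \tfrac{\si^2}{2}\nabla_{x^i}\ln\bigl(\pi^i(x^i|\sfy)\bigr) = 0
\eee
for all $x^i \in \dbR^{n^i}$ and $\sfm$-a.s.\ $\sfy \in \dbY$, which is exactly \cref{eq:FOCeq}.

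Finally, since this holds for each fixed $i$, I would conclude by taking the union of the (measure-zero) exceptional sets in $\sfy$ over the finite collection $i = 1, \dots, n$, which remains $\sfm$-null. There is no real obstacle here — the corollary is a packaging of \cref{thm:FOC} in the game-theoretic language. The only point that warrants a brief comment is that the gradient in $x^i$ (rather than the full variable $x = (x^1, \dots, x^n)$) is the correct one, because when player $i$ optimizes, only the $x^i$-coordinates of the strategy $\nu \in \Pi^i$ are integrated against $\frac{\d F^i}{\d \nu}$, so the first order condition naturally involves only $\nabla_{x^i}$.
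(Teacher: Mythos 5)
Your proposal is correct and matches the paper's intent exactly: the paper gives no separate proof of this corollary, introducing it with ``Applying the first order condition above\dots we immediately obtain,'' which is precisely your argument of freezing $\mu=\bar\pi^{-i}$ and invoking \cref{thm:FOC} with $\eta=\si^2/2$ for each player. Nothing is missing.
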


We shall use  the first order equation \cref{eq:FOCeq} to show the following sufficient condition for the uniqueness of Nash equilibrium.

\begin{corollary}[Uniqueness of Nash equilibrium: Monotonicity]
\label{cor:uniq_monot}
The functions $(F^i)_{i=1,\cdots, n}$ satisfy the monotonicity condition, if for $\bar\pi, \bar\pi'\in \Pi$ we have
\bee
\label{eq:monotonicity}
\sum_{i=1}^n \int \left( \frac{\d F^i}{\d \nu}(\bar\pi^i, \bar\pi^{-i}, \bar x^i) - \frac{\d F^i}{\d \nu}(\bar\pi'^i, \bar\pi'^{-i}, \bar x^i)  \right)(\bar\pi - \bar\pi')(d\bar x) \ge 0.
\eee
We have the following results:
\begin{itemize}
\item[{\rm (i)}]  for $n=1$,  if a function $F$ satisfies the monotonicity condition then it is convex on $\Pi$. Conversely, if $F$ is convex and  satisfies \cref{assum:base}, then $F$ satisfies the monotonicity condition.

\item[{\rm (ii)}] in general ($n\ge 1$), for $i=1,\cdots, n$  and any $\mu\in\Pi^{-i}$, let $F^i(\cdot, \mu): \nu \in\Pi^i \mapsto F^i(\nu, \mu)\in \dbR$  satisfy \cref{assum:base} and $ (F^i)_{i=1, \cds, n} $ satisfy the monotonicity condition. Then for any two Nash equilibria $ \bar \pi^* , \bp'^* \in \Pi $ we have $(\bp^*)^i =(\bp'^*)^i$ for all $i=1, \cds,n$. 
\end{itemize}
\end{corollary}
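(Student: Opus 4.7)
The strategy is to reduce convexity and monotonicity to one-dimensional statements along line segments. For $\bar\pi, \bar\pi'\in\Pi$ and $\lambda\in[0,1]$, set $\bar\pi_\lambda:=(1-\lambda)\bar\pi+\lambda\bar\pi'$ (which lies in $\Pi$ since the marginal constraint is affine), and define $g(\lambda):=F(\bar\pi_\lambda)$. Using \cref{C1derv} applied to $\bar\pi_{\lambda_1},\bar\pi_{\lambda_2}$, together with the identity $(1-s)\bar\pi_{\lambda_1}+s\bar\pi_{\lambda_2}=\bar\pi_{(1-s)\lambda_1+s\lambda_2}$, I would show that $g$ is $\cC^1$ with
\[
g'(\lambda)=\int_{\bar\dbR^d}\frac{\d F}{\d \nu}(\bar\pi_\lambda,\bar x)(\bar\pi'-\bar\pi)(d\bar x).
\]
Pairing this expression with the monotonicity condition evaluated at $(\bar\pi_\lambda,\bar\pi_{\lambda'})$ yields $(\lambda-\lambda')(g'(\lambda)-g'(\lambda'))\ge 0$, so $g'$ is nondecreasing and $g$ is convex, which is exactly convexity of $F$ along segments in $\Pi$. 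Conversely, if $F$ is convex then $g$ is convex, so $g'(1)\ge g'(0)$ is precisely the monotonicity inequality between $\bar\pi$ and $\bar\pi'$. The $\cC^1$ regularity and growth in \cref{assum:base} justify all differentiations under the integral and the continuity of $g'$.

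\textbf{Plan for Part (ii).} Take two Nash equilibria $\bar\pi^*,\bar\pi'^*\in\Pi$. The first step is to exploit \cref{cor:NC_Nash}: for each $i$ and $\sfm$-a.e.\ $\sfy$, integrating the gradient identity \cref{eq:FOCeq} on the connected space $\dbR^{n^i}$ produces
\[
\frac{\d F^i}{\d \nu}(\bar\pi^{*,i},\bar\pi^{*,-i},x^i,\sfy)+\frac{\si^2}{2}\ln\pi^{*,i}(x^i|\sfy)=c_i(\sfy),
\]
where the integration ``constant'' $c_i(\sfy)$ depends only on $\sfy$, and similarly with $\bar\pi'^*$ and some $c'_i(\sfy)$. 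Then I would plug the difference of these two relations into the monotonicity inequality applied to the pair $(\bar\pi^*,\bar\pi'^*)$. Since each integrand depends on $\bar x$ only through $\bar x^i$, the integration against $\bar\pi^*-\bar\pi'^*$ collapses onto the $i$-th marginal difference $\bar\pi^{*,i}-\bar\pi'^{*,i}$, and the terms $c_i(\sfy)-c'_i(\sfy)$ integrate to zero because both marginals project to $\sfm$ on $\dbY$.

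\textbf{Conclusion of Part (ii).} What remains is the identity
\[
\sum_{i=1}^n\int\Big(\tfrac{\d F^i}{\d \nu}(\bar\pi^{*,i},\bar\pi^{*,-i},\bar x^i)-\tfrac{\d F^i}{\d \nu}(\bar\pi'^{*,i},\bar\pi'^{*,-i},\bar x^i)\Big)(\bar\pi^{*,i}-\bar\pi'^{*,i})(d\bar x^i)
=-\tfrac{\si^2}{2}\sum_{i=1}^n\int\sfm(d\sfy)\,J_i(\sfy),
\]
where, using $\int\ln(p/q)(p-q)=H(p|q)+H(q|p)\ge 0$, the quantity $J_i(\sfy):=H\!\left(\pi^{*,i}(\cdot|\sfy)\,|\,\pi'^{*,i}(\cdot|\sfy)\right)+H\!\left(\pi'^{*,i}(\cdot|\sfy)\,|\,\pi^{*,i}(\cdot|\sfy)\right)$ is a symmetric divergence and hence nonnegative, vanishing only when the two conditionals coincide. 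The monotonicity assumption forces the left-hand side to be $\ge 0$, while the right-hand side is $\le 0$. Equality therefore holds, forcing $J_i(\sfy)=0$ for $\sfm$-a.e.\ $\sfy$ and every $i$, which yields $\pi^{*,i}(\cdot|\sfy)=\pi'^{*,i}(\cdot|\sfy)$ and thus $(\bar\pi^*)^i=(\bar\pi'^*)^i$.

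\textbf{Expected obstacle.} The delicate step is justifying passage from the gradient equation \cref{eq:FOCeq} to the pointwise logarithmic identity with a measurable constant $c_i(\sfy)$: this requires confirming that $\pi^{*,i}(\cdot|\sfy)$ admits a strictly positive, locally Lipschitz density for $\sfm$-a.e.\ $\sfy$, and that $\sfy\mapsto c_i(\sfy)$ is measurable enough to integrate against $\sfm$. These properties follow from \cref{thm:FOC} together with the regularity of $\frac{\d F^i}{\d \nu}$ in \cref{assum:base}, but care is needed to ensure the growth/integrability conditions allow one to apply Fubini and to identify the symmetric relative entropy cleanly.
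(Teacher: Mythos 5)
Your proposal is correct and follows essentially the same route as the paper: part (ii) is the paper's argument verbatim (first-order condition from \cref{cor:NC_Nash}, cancellation of the $\sfy$-dependent constants against the common $\dbY$-marginal $\sfm$, and reduction of the monotonicity inequality to the nonpositive symmetrized relative entropy), while part (i) differs only cosmetically in that you show the one-dimensional derivative $g'$ along segments is nondecreasing, whereas the paper checks midpoint convexity directly from the same segment representation of \cref{C1derv}. The obstacle you flag in part (ii) is already handled by Steps 2--3 of the paper's proof of \cref{thm:FOC}, which identify the constant as $\ol c(\sfy)=\int \Xi_\sfy(x)\,\pi^*(dx|\sfy)$, manifestly measurable in $\sfy$.
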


\begin{remark}
Similar monotonicity conditions are common assumptions to ensure the uniqueness of equilibrium in the game theory, in particular in the literature of mean-field games, see e.g. \cite{lasry2007}. 
\end{remark}

As for the existence of Nash equilibria, we obtain the following result following the classical argument based on the fixed point theorem. 

\begin{theorem}[Existence of equilibria]\label{thm:exist_nash}
Assume  that  for $i=1,\cdots, n$,  and $\mu\in \Pi^{-i}$ 
\begin{itemize}
\item[{\rm (i)}]  the set $\argmin_{\nu\in \Pi^i} V^i(\nu, \mu) $ is  non-empty and convex;

\item[{\rm (ii)}]  the function $\tilde F^i (\bp):= F^i(\bp^i, \bp^{-i})$ is  $\cW_p$-continuous on $\Pi$;  

\item[{\rm (iii)}]  the function $F^i(\cdot, \mu): \nu \in\Pi^i \mapsto F^i(\nu, \mu)\in \dbR$  satisfies \cref{assum:base},  and there exist some $q\ge q'>0,~ C,C'>0 \in \dbR$ such that for all $\bp\in \Pi$ we have
\bea\label{basic_dissipative}
C'| \bar x^i |^{q'} - C\le \frac{\d F^i}{\d \nu}(\bp^i, \bp^{-i},  \bar x^i) \le C| \bar x^i |^q + C.
\eea
\end{itemize}
Then there exists at least one Nash equilibrium $\bp^*\in \Pi$ for the game on random environment.
\end{theorem}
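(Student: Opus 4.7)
The strategy is to cast the search for a Nash equilibrium as a fixed-point problem for the best-response correspondence and apply the Kakutani--Fan--Glicksberg theorem. Define $\Phi : \Pi \to 2^{\Pi}$ by
\[
\Phi(\bp) := \bigl\{\,\tilde\bp \in \Pi \;:\; \tilde\bp^i \in \argmin\nolimits_{\nu\in\Pi^i} V^i(\nu,\bp^{-i}) \text{ for every } i=1,\dots,n\,\bigr\}.
\]
By \cref{defn:nash}, any fixed point $\bp^* \in \Phi(\bp^*)$ is a Nash equilibrium. The four things to verify are: (a) a $\cW_p$-compact convex set $K \subset \Pi$ exists with $\Phi(K)\subset K$; (b) $\Phi$ has non-empty values on $K$; (c) $\Phi$ has convex values; (d) $\Phi$ has closed graph in $K\times K$.

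For (a), I would use the first-order condition of \cref{thm:FOC}: any minimizer $\tilde\bp^i$ of $V^i(\cdot,\bp^{-i})$ admits the explicit conditional density
\[
\tilde\pi^i(x^i\,|\,\sfy) \;=\; Z^i(\sfy)^{-1}\,\exp\!\Bigl(-\tfrac{2}{\sigma^2}\,\tfrac{\delta F^i}{\delta\nu}(\tilde\bp^i,\bp^{-i},x^i,\sfy)\Bigr).
\]
The two-sided growth bound \cref{basic_dissipative} then yields Gibbs-type tails of order $\exp(-c|x^i|^{q'})$ with $Z^i(\sfy)$ controlled above and below uniformly in $\sfy$. This gives uniform (in $\bp$) control of all polynomial moments of the best-response marginals, hence every element of $\Phi(\bp)$ has marginals lying in a fixed $\cW_p$-compact set $K_0^i\subset\Pi^i$. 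Taking $K:=\{\bp\in\Pi : \bp^i\in K_0^i\text{ for all }i\}$ yields a $\cW_p$-compact convex set invariant under $\Phi$ (tightness of each marginal combined with the fixed $\sfy$-marginal $\sfm$ gives tightness of the joint).

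For (b), given $\bp\in K$, assumption (i) yields non-empty $\argmin$ sets in each $\Pi^i$; picking representatives $\tilde\bp^i$ and gluing them via $\tilde\bp(d\bar x) := \prod_{i=1}^n \tilde\pi^i(dx^i|\sfy)\,\sfm(d\sfy)$ produces an element of $\Phi(\bp)$. For (c), the map $\bp\mapsto\bp^i$ is affine, so the linear combination of two elements of $\Phi(\bp)$ has $i$-th marginal in $\argmin V^i(\cdot,\bp^{-i})$ by the convexity assumption in (i). For (d), if $\bp_n\to\bp$ and $\tilde\bp_n\in\Phi(\bp_n)$ with $\tilde\bp_n\to\tilde\bp$ in $\cW_p$, the corresponding marginals converge too; for any $\nu\in\Pi^i$ the inequality $V^i(\tilde\bp_n^i,\bp_n^{-i})\le V^i(\nu,\bp_n^{-i})$ passes to the limit, using the $\cW_p$-continuity of $\tilde F^i$ from (ii) on both sides and the standard $\cW_p$-lower semicontinuity of the relative entropy on the left. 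Hence $\tilde\bp^i\in\argmin V^i(\cdot,\bp^{-i})$, and Kakutani--Fan--Glicksberg applied to $\Phi|_K$ delivers a Nash equilibrium.

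The main obstacle is step (a): extracting from the growth bound \cref{basic_dissipative} a genuinely uniform (in $\bp$) tightness estimate on best responses. The Gibbs representation involves the partition function $Z^i(\sfy)$, which itself depends on the unknown minimizer $\tilde\bp^i$ through $\frac{\delta F^i}{\delta\nu}(\tilde\bp^i,\bp^{-i},\cdot,\sfy)$, and one must close the resulting loop by using the $p$-polynomial growth from \cref{assum:base} together with the lower bound $C'|\bar x^i|^{q'}-C$ to obtain an a priori $\sfy$-uniform bound on $Z^i$. Everything else---convexity, non-emptiness, and closed-graph---is then a routine consequence of assumptions (i) and (ii).
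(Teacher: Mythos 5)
Your proposal follows essentially the same route as the paper's proof: the best-response correspondence, the Gibbs representation from the first-order condition combined with the two-sided bound \cref{basic_dissipative} to obtain a uniform $\cW_p$-compact invariant set, the closed-graph verification via lower semicontinuity of $V^i$ and continuity of $\tilde F^i$, and the Kakutani fixed-point theorem. The only cosmetic difference is that the paper handles the closed-graph step by explicitly forming concatenations $\nu^i\otimes\bp_m^{-i}$ and extracting convergent subsequences in the compact set, a detail you gloss over but which is resolved in the same spirit.
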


\begin{remark}
There are various sufficient conditions so that  the set $\argmin_{\nu\in \Pi^i} V^i(\nu, \mu)$ is convex, for example, the function $\nu\mapsto V^i(\nu, \mu)$ is quasi-convex, or  $V^i(\nu, \mu)$ has a unique minimizer. That is why we leave the assumption (i) in the abstract form.
\end{remark}

\subsection{Invariant measure of the MFL system}

In view of the Fokker-Planck equation \cref{eq:FP}, the first order equation \cref{eq:FOCeq} appears to be a sufficient condition for $ \bar \pi$ being an invariant measure of the MFL system \cref{eq:MFL}. That is why we consider the MFL dynamics as a reasonable tool to compute the Nash equilibria of the game on random environment.

The following \cref{thm:well} suggests that proving the {\it existence} of Nash equilibria  and the {\it uniqueness} of the invariant measure, we can establish the equivalence between the invariant measure of  \cref{eq:MFL} and one Nash equilibrium.
While the existence of Nash equilibria has been discussed in\cref{thm:exist_nash}, 
the uniqueness of invariant measure of mean-field dynamics is more complicated and is indeed a long-standing problem in probability and analysis. We are going to use the coupling argument in order to obtain the contraction result in \cref{thm:contraction}. 

Define the average Wasserstein distance:
\bee
\ol\cW_p(\bp, \bp') : =\Big( \int_\dbY \cW_p^p\big(\pi(\cd|\sfy), \pi'(\cd|\sfy)\big) \sfm(d\sfy) \Big)^\frac1p,
\eee
and  the spaces of flow of probability measures:
\beaa
& C_p([0,T], \Pi) : =\left\{(\bp_t)_{t\in [0,T]}: ~ \mbox{for each $t$,}~ \bp_t\in \Pi,~\mbox{and}~ t \mapsto \bp_t ~\mbox{is continuous in $\cW_p$}\right\},\\
& \ol C_p([0,T], \Pi) : =\left\{(\bp_t)_{t\in [0,T]}: ~ \mbox{for each $t$,}~ \bp_t\in \Pi,~\mbox{and}~ t \mapsto \bp_t ~\mbox{is continuous in $\ol\cW_p$}\right\}.
\eeaa

\begin{theorem}\label{thm:well}
For $i=1,\cdots, n$ and $\mu\in \Pi^{-i}$, let $F^i(\cdot, \mu): \nu \in\Pi^i\mapsto F^i(\nu, \mu)\in \dbR$  satisfy \cref{assum:base}. Further assume that
\begin{itemize}
\item the initial distribution $\bp_0={\rm Law}(\bar X_0)\in \Pi$;
\item for each $i=1,\cds,n$, the function $\nabla_{x^i} \frac{\d F^i}{\d \nu}$ is Lipschitz continuous in the following sense
\begin{multline*}
\left| \nabla_{x^i} \frac{\d F^i}{\d \nu}(\nu, \mu,  x^i, \sfy) -\nabla_{x^i} \frac{\d F^i}{\d \nu}(\nu', \mu', x'^i, \sfy)\right|\\
\le C\left(\cW_p(\nu, \nu') + \cW_p (\mu, \mu')  + | x^i -  x'^i |\right)
+ C_0 \cW_p \big(\mu(\cd|\sfy), \mu'(\cd|\sfy)\big),
\end{multline*}
and satisfies
\be\label{eq:uniform0}
\sup_{\nu\in \Pi^i, \mu\in \Pi^{-i}, \sfy\in \dbY} \left| \nabla_{x^i} \frac{\d F^i}{\d \nu}(\nu, \mu,  0, \sfy) \right| <\infty.
\ee
\end{itemize}
Then the MFL system \cref{eq:MFL} admits a unique strong solution in $\ol C_p([0,T], \Pi)$ for all $T>0$. In particular, if $C_0=0$ then the unique solution lies in $C_p([0,T], \Pi)$ for all $T>0$. Moreover,  each Nash equilibrium $\bar\pi^*$ defined in \cref{defn:nash} is an invariant measure of  \cref{eq:MFL}.
\end{theorem}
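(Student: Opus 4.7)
The plan is to establish existence and uniqueness of the MFL system by a Picard--Banach fixed point argument on the space of flows of laws, carrying out the contraction in the average Wasserstein distance $\ol\cW_p$ (or in $\cW_p$ when $C_0=0$), and then to derive the invariance claim from the first order condition \cref{eq:FOCeq}. Fix $T>0$ and a candidate flow $(\bar\mu_t)_{t\in[0,T]}\in\ol C_p([0,T],\Pi)$. Freezing the marginal arguments of each drift at $(\bar\mu^i_t,\bar\mu^{-i}_t)$ yields a standard (non-McKean--Vlasov) SDE with random coefficients depending only on $(x^i,Y)$; these are Lipschitz in $x^i$ and, combining Lipschitz continuity in $x^i$ with the uniform bound \cref{eq:uniform0}, of at most linear growth in $x^i$. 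Classical SDE theory then provides a unique strong solution $\bar X^\mu$ with $p$-th moments finite uniformly on $[0,T]$, and the map $\Phi:\bar\mu\mapsto\mathrm{Law}(\bar X^\mu_\cdot)$ is well defined on $\ol C_p([0,T],\Pi)$, time-continuity in $\ol\cW_p$ following from the standard $L^p$-continuity of the SDE solution.

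The contraction step is where the average Wasserstein distance becomes essential. Using a synchronous coupling of $X^\mu$ and $X^\nu$ (same initial datum, same $W$, and same $Y$), the Lipschitz hypothesis yields for each $i$
\bee
|X^{\mu,i}_t-X^{\nu,i}_t|\le C\int_0^t\bigl(\cW_p(\bar\mu^i_s,\bar\nu^i_s)+\cW_p(\bar\mu^{-i}_s,\bar\nu^{-i}_s)+|X^{\mu,i}_s-X^{\nu,i}_s|\bigr)ds+C_0\int_0^t\cW_p\big(\mu^{-i}_s(\cdot|Y),\nu^{-i}_s(\cdot|Y)\big)ds.
\eee
Raising to the $p$-th power, taking expectations, and using Gronwall to absorb the self-referential $|X^{\mu,i}_s-X^{\nu,i}_s|^p$ term, together with the identity $\dbE[\cW_p^p(\mu^{-i}_s(\cdot|Y),\nu^{-i}_s(\cdot|Y))]=\ol\cW_p^p(\bar\mu^{-i}_s,\bar\nu^{-i}_s)$, the bound $\cW_p\le\ol\cW_p$ on $\Pi$ (valid because both measures have $Y$-marginal $\sfm$), and the estimate $\ol\cW_p^p(\Phi(\bar\mu)_t,\Phi(\bar\nu)_t)\le\sum_i\dbE[|X^{\mu,i}_t-X^{\nu,i}_t|^p]$ coming from the coupling through the common $Y$, one arrives at
\bee
\ol\cW_p^p\bigl(\Phi(\bar\mu)_t,\Phi(\bar\nu)_t\bigr)\le K_T\int_0^t\ol\cW_p^p(\bar\mu_s,\bar\nu_s)\,ds,
\eee
which is a contraction on a sufficiently short interval $[0,\tau]$; concatenation over successive intervals extends the solution to arbitrary $T$. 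When $C_0=0$ the conditional term vanishes, every right-hand side is already controlled by $\cW_p$, and the identical scheme yields the unique solution inside $C_p([0,T],\Pi)$.

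For the invariance claim, apply \cref{cor:NC_Nash} to a Nash equilibrium $\bar\pi^*$ to obtain $\nabla_{x^i}\frac{\d F^i}{\d \nu}(\bar\pi^{*,i},\bar\pi^{*,-i},x^i,\sfy)=-\frac{\si^2}{2}\nabla_{x^i}\ln\pi^{*,i}(x^i|\sfy)$. Substituting this relation into the marginal Fokker--Planck equation \cref{eq:FP} with the arguments frozen at $\bar\pi^i=\bar\pi^{*,i}$ makes the drift and diffusion parts cancel after one integration by parts, so $\pi^{*,i}(\cdot|\sfy)$ is stationary for the corresponding frozen classical SDE, $\sfm$-a.s.\ $\sfy$. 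Started from $\bar\pi^*$, the frozen dynamics therefore keeps its marginals equal to $\bar\pi^{*,i}$; but this is precisely what is needed for the frozen process to also solve the McKean--Vlasov SDE \cref{eq:MFL}, so by the uniqueness established above the two processes coincide, and the MFL marginals remain equal to $\bar\pi^{*,i}$ for all $t\ge 0$ (the joint law being preserved whenever $\bar\pi^*$ equals the product of its conditional marginals given $Y$). The main technical obstacle is precisely the constant $C_0$ controlling the dependence on $\mu^{-i}(\cdot|\sfy)$: its presence forces the stronger metric $\ol\cW_p$ and the synchronous coupling through $Y$, and its vanishing is what underpins the cleaner $C_p$ statement.
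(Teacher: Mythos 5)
Your argument is correct and follows essentially the same route as the paper: a Banach fixed-point iteration on flows of laws, with a synchronous coupling (common $W$ and common $Y$), Gronwall's inequality, and integration against $\sfm$ to obtain the contraction in $\ol\cW_p$ on a short horizon, the case $C_0=0$ collapsing to a contraction in $\cW_p$. Your explicit treatment of the invariance claim via \cref{cor:NC_Nash} and the frozen Fokker--Planck equation goes beyond the paper's proof (which leaves this to the remark preceding the theorem), and your caveat that only the marginals $\bar\pi^{*,i}$ --- rather than the full joint law, unless it is conditionally a product given $Y$ --- are preserved is a legitimate subtlety the paper glosses over.
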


\begin{remark}
{\rm (i)}\q The dependence on $\mu(\cd|\sfy)$ of the function $\nabla_{x^i}  \frac{\d F^i}{\d \nu}(\nu, \mu,  x^i, \sfy) $ is inevitable for some interesting examples such as \cref{nashcor} in the introduction, where under some mild conditions we may compute
\bee
 \nabla_{x^i}  \frac{\d F^i}{\d \nu} \big(\nu^i, (\nu^j)_{j\neq i} , x^i, \sfy\big):=\int \nabla_{x^i} f^i(x^1 ,\cds,x^i, \cds, x^n, \sfy)  \prod_{j\neq i}\nu^{j}(dx^j|\sfy).
\eee
Note that when there is only one player, there is no such dependence.

\ms
\no{\rm (ii)}\q The Lipschitz condition with respect to $\cW_p(\nu, \nu')$ and $\cW_p(\mu, \mu')$ can be replaced by the one with respect to  $\ol\cW_p(\nu, \nu')$ and $\ol\cW_p(\mu, \mu')$. The latter is weaker.  Under such assumption we cannot prove the particular case that the unique solution  lies in $C_p([0,T], \Pi)$ for all $T>0$ when $C_0=0$. In the following analysis of the one-player problem it is crucial for us that the solution is in  $C_p([0,T], \Pi)$, so we prefer to state the Lipschitz condition in its current form.
\end{remark}

%
%
%
%

\begin{theorem}[Uniqueness of invariant measure: Contraction]\label{thm:contraction}
For $i=1,\cdots, n$ and $\mu\in \Pi^{-i}$, let $F^i(\cdot, \mu): \nu \in\Pi^i\mapsto F^i(\nu, \mu)\in \dbR$  satisfy \cref{assum:base}. 
   Assume that
\begin{itemize}
\item  for each $i=1,\cds,n$, the function $\nabla_{x^i} \frac{\d F^i}{\d \nu}$ is Lipschitz continuous in the following sense:
\begin{multline*}
\left| \left(\nabla_{x^i} \frac{\d F^i}{\d \nu}(\bp^i, \bp^{-i},  x^i, \sfy) \right)_{i=1,\cds,n}
- \left(\nabla_{x^i} \frac{\d F^i}{\d \nu}(\bp'^i, \bp'^{-i}, x'^i, \sfy) \right)_{i=1,\cds,n} \right|\\
\le \g \Big(  \ol\cW_1(\bp,\bp') +\cW_1\big(\pi(\cd|\sfy),\pi'(\cd|\sfy)\big)   \Big) + C|x^i -x'^i|,
\end{multline*}
and \cref{eq:uniform0} holds true;
\item  there is a continuous function $ \kappa:(0,+\infty)\rightarrow\dbR $ s.t. $ \limsup\limits_{r\rightarrow+\infty}\kappa(r)<0$, $\int_{0}^{1}r\kappa(r) d r<+\infty  $ and for any $(\bp,\sfy)\in \Pi\times\dbY$ we have for all $x, x'\in \dbR^N$ ($x\neq x'$)
\bee
\sum_{i=1}^n (x^i -  x'^i)\cdot\Big(-\nabla_{x^i} \frac{\d F^i}{\d \nu}(\bp^i, \bp^{-i}, x^i, \sfy) +\nabla_{x^i} \frac{\d F^i}{\d \nu}(\bp^i, \bp^{-i},  x'^i, \sfy)\Big) 
\leq \kappa\left(|x -x'|\right)\left|x - x'\right|^2.
\eee
\end{itemize}
Let $\bp_0, \bp'_0 \in \cP_q(\bar\dbR^d)\cap \Pi$ for some $q>1$ be two initial distributions of the MFL system \cref{eq:MFL}. Then we have
\be\label{eq:contractionresult}
 \ol\cW_1 (\bp_t, \bp'_t)
\le e^{(2\g-c\sigma^2) t}  \frac{2}{\f(R_1)}  \ol\cW_1 (\bp_0, \bp'_0),
\ee
where the coefficients read
\beaa
	& \f(r) = \exp\left( -\frac12\int_{0}^{r}\frac{u\kappa^+(u)}{\sigma^2} d u \right),\q
	c^{-1} = \int_{0}^{R_2}\Phi(r)\f(r)^{-1} d r,
	\q \Phi(r) = \int_{0}^{r}\f(s) d s,\\
	& R_1 := \inf\{ R\geq 0: \kappa(r)\leq0 \text{ for all }r\geq R \},\\
&R_2 := \inf\{ R\geq R_1: \kappa(r)R(R-R_1)\leq-4\si^2\text{ for all }r\geq R \}.
\eeaa
In particular, if $2\g < c\si^2$ (i.e. the MFL system has small dependence on the marginal laws), there is a unique invariant measure in $\cup_{q>1}\cP_q(\bar\dbR^N)\cap\Pi$.
\end{theorem}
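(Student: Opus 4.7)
The plan is to adapt Eberle's reflection-coupling argument (cf.\ \cite{eberle11, eberle2019quantitative}) to the random-environment setting. Fix an initial coupling that $\sfy$-wise realises $\cW_1(\pi_0(\cd|\sfy),\pi'_0(\cd|\sfy))$. For $\sfm$-a.e. $\sfy\in\dbY$, I would construct, measurably in $\sfy$, two copies $(X^\sfy_t, X'^\sfy_t)$ of the MFL system \cref{eq:MFL_sys} driven by reflection-coupled Brownian motions (synchronised once the processes meet, with the standard mollification of the reflection direction near the diagonal in order to justify Itô's formula). By \cref{thm:well} the associated marginal flows $\bp_t,\bp'_t$ are well defined in $\ol C_1([0,T],\Pi)$. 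Set $Z^\sfy_t:=|X^\sfy_t - X'^\sfy_t|$.

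Next, let $f$ be Eberle's concave distance function determined by the quantities $\varphi,\Phi,c,R_1,R_2$ appearing in the statement. Standard calculations show that $f\in\cC^2$ is non-decreasing and concave with $f(0)=0$, $f'(0)=1$, $f'\equiv f'(R_2)$ on $[R_2,\infty)$, satisfies the ODE-type inequality $2\sigma^2 f''(r)+r\kappa(r)f'(r)\leq -c\sigma^2 f(r)$ on $(0,R_2]$, and obeys the two-sided bound $\tfrac{\varphi(R_1)}{2}(r\wedge R_2) \leq f(r) \leq r$.

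Apply Itô's formula to $f(Z^\sfy_t)$. The factor~2 from reflection places $2\sigma^2 f''(Z^\sfy_t)$ in the drift; the spatial one-sided Lipschitz hypothesis with modulus $\kappa$ contributes $\kappa(Z^\sfy_t)Z^\sfy_t f'(Z^\sfy_t)$; and the mean-field Lipschitz hypothesis with constant $\gamma$ adds an error of order $\gamma f'(Z^\sfy_t)\bigl(\ol\cW_1(\bp_t,\bp'_t) + \cW_1(\pi_t(\cd|\sfy),\pi'_t(\cd|\sfy))\bigr)$. Using the ODE inequality together with $f'\leq 1$, then taking expectations and integrating over $\sfm$, the quantity $U_t:=\int_\dbY \dbE[f(Z^\sfy_t)]\sfm(d\sfy)$ satisfies
\begin{equation*}
\tfrac{d}{dt}U_t \leq -c\sigma^2 U_t + \gamma\,\ol\cW_1(\bp_t,\bp'_t) + \gamma\int_\dbY \cW_1\bigl(\pi_t(\cd|\sfy),\pi'_t(\cd|\sfy)\bigr)\sfm(d\sfy).
\end{equation*}
Since $(X^\sfy_t,X'^\sfy_t)$ is a coupling of the conditional laws, both Wasserstein terms on the right are dominated by $\int_\dbY \dbE[Z^\sfy_t]\sfm(d\sfy)$, which in turn is controlled by $\tfrac{2}{\varphi(R_1)}U_t$ via the lower bound on $f$. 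The resulting linear differential inequality yields, after Gronwall and careful tracking of the normalisation of $f$, a bound of the form $U_t \leq e^{(2\gamma-c\sigma^2)t}U_0$. Combining this with $U_0 \leq \ol\cW_1(\bp_0,\bp'_0)$ (from $f(r)\leq r$) and $\ol\cW_1(\bp_t,\bp'_t) \leq \tfrac{2}{\varphi(R_1)}U_t$ (from the lower bound on $f$) gives \cref{eq:contractionresult}. Finally, uniqueness in $\cup_{q>1}\cP_q(\bar\dbR^N)\cap\Pi$ in the regime $2\gamma<c\sigma^2$ follows by applying the contraction estimate to two stationary solutions $\bp_t\equiv\bp^*$ and $\bp'_t\equiv\bp'^*$.

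The hardest part will be Step~3: the mean-field perturbation injects two non-local drift contributions that do not cancel under reflection, so closing the Gronwall loop is only possible thanks to the sharp two-sided comparison $\tfrac{\varphi(R_1)}{2}r\leq f(r)\leq r$, and it is precisely this comparison that limits admissible $\gamma$ to the regime $2\gamma<c\sigma^2$ and introduces the prefactor $\tfrac{2}{\varphi(R_1)}$ in the final estimate. A subordinate technical point is the measurable-in-$\sfy$ realisation of the reflection coupling (obtained by a measurable selection of the optimal $\cW_1$-coupling of the conditional initial laws together with the standard mollification near the diagonal).
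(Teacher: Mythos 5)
Your proposal follows essentially the same route as the paper's proof: the reflection--synchronous coupling of Eberle et al.\ applied $\sfy$-wise (with an initial coupling realising $\cW_1(\pi_0(\cd|\sfy),\pi'_0(\cd|\sfy))$ and a cut-off/mollification near the diagonal that is removed at the end), Eberle's concave function $f$ with its ODE inequality, and a Gronwall argument closed by the two-sided comparison between $f(r)$ and $r$. One small correction: the lower bound you actually need --- and which Eberle's construction does provide, since $f'\ge\varphi(R_1)/2$ everywhere --- is the global linear bound $f(r)\ge\tfrac{\varphi(R_1)}{2}\,r$ rather than $f(r)\ge\tfrac{\varphi(R_1)}{2}(r\wedge R_2)$; the truncated version would not let you dominate $\dbE\big[|X^\sfy_t-X'^\sfy_t|\big]$ by $\tfrac{2}{\varphi(R_1)}\dbE\big[f(|X^\sfy_t-X'^\sfy_t|)\big]$ and hence would not close the Gronwall loop.
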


%

\subsection{Special case: one player}

When the problem degenerates to the case of a single player, the MFL dynamics becomes a gradient flow and the function $V=F+\frac{\si^2}{2}H$ is a natural Lyapunov  function for the dynamics.

\begin{theorem}[Gradient flow]\label{thm:GD}
Consider a function $F$ satisfying \cref{assum:base} with $p=2$. Let the assumption of \cref{thm:well} hold true, and further assume that
\begin{itemize}
\item  there is $\e>0$ such that  for all $\bp\in \Pi$ and $\sfy\in \dbY$
\be\label{assum:dissipative}
 x \cd \nabla_x \frac{\d F}{\d \nu}(\bp, x, \sfy) \ge  \e |x|^2, \q\mbox{for $|x|$  big enough} ; 
 \ee

\item for all $\bp\in \Pi$ and $\sfy\in \dbY$, the mapping $x\mapsto  \nabla_x \frac{\d F}{\d \nu}(\bp, x, \sfy)$ belongs to $C^3$;

\item for all $\sfy\in \dbY$, the function $(\bp, x)\mapsto \nabla_x \frac{\d F}{\d \nu}(\bp, x, \sfy), \nabla_x^2 \frac{\d F}{\d \nu}(\bp, x, \sfy)$ are jointly continuous.
\end{itemize}
Then we have for $s'>s>0$ 
\be\label{eq:GD}
 V(\bp_{s'}) -V(\bp_s) =-\int_s^{s'}  \int_{\bar\dbR^N} \left| \nabla_{x} \frac{\d F}{\d \nu}(\bar\pi_t,  x,\sfy) + \frac{\si^2}{2} \nabla_{x}\ln\big(\pi_t(x |\sfy)\big) \right|^2  \bp_t (d\bar x) dt
\ee
\end{theorem}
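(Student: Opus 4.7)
The identity \cref{eq:GD} is the standard de Bruijn-type entropy-dissipation identity for the Fokker-Planck flow \cref{eq:FP}. My plan is to differentiate $V(\bp_t)=F(\bp_t)+\frac{\si^2}{2}H(\bp_t)$ in time, apply integration by parts once for the $F$-term and once for the $H$-term, and observe that the two resulting expressions combine with weights $1$ and $\si^2/2$ to produce the perfect square appearing on the right-hand side of \cref{eq:GD}. Integration in $t$ over $[s,s']$ then yields the claim.

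\emph{Step 1 (regularity of $\pi_t$).} I will first establish that for each $t>0$ and $\sfm$-a.s.\ $\sfy\in\dbY$ the conditional density $\pi_t(\cd|\sfy)$ is smooth and strictly positive, with Gaussian-type tail decay and finite Fisher information. The drift $x\mapsto \nabla_x \frac{\d F}{\d \nu}(\bp_t,x,\sfy)$ is globally Lipschitz in $x$ (from the hypotheses inherited from \cref{thm:well}) and $C^2$ in $x$ (from the $C^3$-assumption). Combined with the constant, non-degenerate noise and the dissipativity bound \cref{assum:dissipative}, standard parabolic regularity for the associated linear Fokker-Planck equation (with $\bp_t$ frozen at each instant) delivers the required smoothness; dissipativity simultaneously produces Lyapunov-type moment bounds of all orders and the Gaussian decay of $\pi_t$ together with integrability of $|\nabla_x \ln \pi_t|^2 \pi_t$.

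\emph{Step 2 (chain rule and integration by parts).} By \cref{C1derv}, the $\cW_p$-continuity of $\frac{\d F}{\d \nu}$ in $\bp$, and the fact that $t\mapsto \bp_t \in C_p([0,T],\Pi)$ from \cref{thm:well} (the one-player case enjoys $C_0=0$), I obtain
\bee
\frac{d}{dt}F(\bp_t) = \int_\dbY \int_{\dbR^N} \frac{\d F}{\d \nu}(\bp_t,x,\sfy)\,\partial_t \pi_t(x|\sfy)\, dx\, \sfm(d\sfy).
\eee
Substituting \cref{eq:FP} and integrating by parts in $x$ (legitimate by Step~1) yields
\bee
\frac{d}{dt}F(\bp_t) = -\int_{\bar\dbR^N} \nabla_x \frac{\d F}{\d \nu}(\bp_t,x,\sfy)\cdot \left(\nabla_x \frac{\d F}{\d \nu}(\bp_t,x,\sfy) + \frac{\si^2}{2}\nabla_x \ln \pi_t(x|\sfy)\right)\bp_t(d\bar x).
\eee
The same procedure applied to $H(\bp_t)=\int \pi_t \ln \pi_t\, dx\, \sfm(d\sfy)$, using $\partial_t(\pi \ln \pi)=(1+\ln \pi)\partial_t \pi$ followed by integration by parts, produces
\bee
\frac{d}{dt}H(\bp_t) = -\int_{\bar\dbR^N} \nabla_x \ln \pi_t(x|\sfy)\cdot \left(\nabla_x \frac{\d F}{\d \nu}(\bp_t,x,\sfy) + \frac{\si^2}{2}\nabla_x \ln \pi_t(x|\sfy)\right)\bp_t(d\bar x).
\eee
Combining the two with weights $1$ and $\si^2/2$ assembles the perfect square $\bigl|\nabla_x \frac{\d F}{\d \nu} + \frac{\si^2}{2}\nabla_x \ln \pi_t\bigr|^2$ inside the integral, and integrating in $t$ gives \cref{eq:GD}.

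\emph{Main obstacle.} The two delicate points are (i) rigorously justifying the time differentiation of the \emph{nonlinear} functional $F(\bp_t)$, which requires the $\cW_p$-continuity of $\frac{\d F}{\d \nu}$ in $\bp$ together with $p$-moment control of $\bp_t$ (exactly the compatibility encoded in \cref{assum:base} between growth of the derivative and the test-function class), and (ii) ensuring enough decay of $\pi_t$ and $\nabla_x\ln\pi_t$ at infinity to legitimize the integration-by-parts and to make $H(\bp_t)$ differentiable. Both rely critically on the dissipativity \cref{assum:dissipative} combined with the $C^3$-smoothness of $\nabla_x\frac{\d F}{\d \nu}$; once they are in place, the proof reduces to the algebraic identity that turns two cross terms plus two diagonal terms into a perfect square.
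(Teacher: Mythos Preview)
Your proposal is correct and follows the same overall scheme as the paper: differentiate $F(\bp_t)$ and $H(\bp_t)$ separately, use the Fokker--Planck equation, integrate by parts, and combine into a perfect square. The technical justifications you flag (smoothness and strict positivity of $\pi_t(\cdot|\sfy)$, finite Fisher information, integrability needed for the integrations by parts) are exactly the content of the paper's \cref{lem:FP} and \cref{lem:finit_ent}.

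The one methodological difference is that you argue at the \emph{PDE level} throughout (chain rule for $F$ via \cref{C1derv}, direct time-differentiation of $\int \pi_t\ln\pi_t$), whereas the paper works at the \emph{stochastic level}: it computes $dF(\bp_t)$ via the It\^o-type formula of Carmona--Delarue and handles the entropy by applying It\^o's formula to $\log\pi_t(X^\sfy_t|\sfy)$ along the process, then taking expectations. Both routes land on the same two intermediate identities before summing; your approach is slightly more elementary in that it avoids the Carmona--Delarue chain rule, while the paper's It\^o computation for the entropy makes the appearance of the Laplacian term $\Delta_{xx}\ln\pi_t$ (and hence the need for the second identity in \cref{lem:IPP}) more transparent.
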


Using an argument, similar to that in \cite{HRSS19, HKR19},  based on the Lasalle's invariant principle, we can show  the following theorem.
\begin{theorem}\label{thm:invariant}
Consider the following statements: 
\begin{itemize}
\item[{\rm (i)}] $\bp_0\in \cup_{q>2}\cP_q(\bar\dbR^N)$;
\item[{\rm (ii.a)}] $\dbY$ is countable;
\item[{\rm (ii.b)}] $\dbY = \dbR^m$, $\sfm$ is absolutely continuous with respect to the Lebesgue measure  and the function $e^{-\frac{2}{\si^2}\frac{\d F }{\d \nu}(\bp, \cd)}\sfm$ is semiconvex in $\bar x$ for any given $\bp\in \Pi$.
\end{itemize}
Let  the assumptions of \cref{thm:GD} hold true. Further assume {\rm (i)}, {\rm (ii.a)} or {\rm (i)}, {\rm (ii.b)}.
Then all the $\cW_2$-cluster points of the marginal laws $(\bar\pi_t)_{t\ge 0}$ of the MFL system \cref{eq:MFL} belong to  the set
\be\label{eq:limitset}
\cI:=\left\{ \bar\pi \in\Pi:~ \nabla_{x} \frac{\d F}{\d \nu}(\bar\pi,  \cd,\sfy) + \frac{\si^2}{2} \nabla_{x}\ln\big(\pi(\cd|\sfy)\big) =0\q\mbox{$\sfm$-a.s.} \right\}.
\ee
\end{theorem}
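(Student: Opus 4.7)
The plan is to deploy a LaSalle invariance argument built on the free-energy dissipation in \cref{thm:GD}, adapting the strategy of \cite{HRSS19, HKR19} to the parametric setting indexed by $\sfy$.

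First I would establish that $V$ stays bounded below along the flow and that $(\bar\pi_t)_{t\ge 0}$ is $\cW_2$-tight. The dissipativity assumption \cref{assum:dissipative} combined with an It\^o computation for $|X_t|^2$ yields a uniform $q$-moment bound for some $q>2$ starting from $\bar\pi_0\in\cP_q$, hence tightness; the same bound, together with the polynomial growth in \cref{assum:base}, also gives a uniform lower bound on $V(\bar\pi_t)$, after absorbing the entropy from below by comparing $H(\bar\pi_t)$ to a Gaussian reference measure. Combined with \cref{eq:GD}, this produces
\be\label{plan:finite_diss}
\int_0^\infty D(\bar\pi_t)\, dt <\infty,\q D(\bar\pi):=\int_{\bar\dbR^N}\!\Big|\nabla_x\tfrac{\d F}{\d \nu}(\bar\pi,x,\sfy)+\tfrac{\si^2}{2}\nabla_x\ln\pi(x|\sfy)\Big|^2\bar\pi(d\bar x).
\ee

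Second, for a $\cW_2$-cluster point $\bar\pi_\infty=\lim_k\bar\pi_{t_k}$, I would extract a nearby sequence $s_k\in[t_k-1,t_k+1]$ with $D(\bar\pi_{s_k})\to 0$: the set $\{t:D(\bar\pi_t)>\eps\}$ has finite Lebesgue measure by \cref{plan:finite_diss}, so such $s_k$ exist, and $\cW_2$-continuity of the flow (established in the proof of \cref{thm:well}) forces $\bar\pi_{s_k}\to\bar\pi_\infty$. It then remains to show $D(\bar\pi_\infty)=0$, from which the pointwise identity defining $\cI$ follows: the marginal laws of the Langevin system admit strictly positive smooth densities $\pi_t(x|\sfy)$ for $t>0$ (parametric parabolic smoothing in the $\sfy$-slices, since the system is genuinely elliptic once $\sfy$ is fixed), and this positivity is inherited by $\bar\pi_\infty$, so $D(\bar\pi_\infty)=0$ upgrades to the everywhere $\sfm$-a.s.\ identity.

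Third, the main technical step is $\cW_2$-lower semi-continuity of $D$. Splitting the integrand $|\nabla_x\tfrac{\d F}{\d \nu}+\tfrac{\si^2}{2}\nabla_x\ln\pi|^2$ into the drift squared norm, twice a cross term, and the Fisher piece $\tfrac{\si^4}{4}|\nabla_x\ln\pi|^2$, the drift squared piece is $\cW_p$-continuous via the $C^3$ and joint-continuity hypotheses in \cref{thm:GD}; the cross term is handled by an integration by parts (justified by the $C^3$ regularity) that eliminates $\nabla_x\ln\pi$ in favour of $\Delta_x\tfrac{\d F}{\d \nu}$ integrated against $\bar\pi$, which is $\cW_p$-continuous; the Fisher piece is treated via the classical $\cW_2$-lower semi-continuity of relative Fisher information with respect to a reference measure. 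Cases (ii.a) and (ii.b) of the hypothesis govern which reference to use: in (ii.a) the disintegration along the countable $\dbY$ is transparent and one applies standard LSC on each fibre and sums with weights $\sfm(\{\sfy\})$; in (ii.b) the semi-convexity of $e^{-\frac{2}{\si^2}\frac{\d F}{\d \nu}(\bar\pi,\cdot)}\sfm$ allows one to apply the Ambrosio--Gigli--Savar\'e LSC for Fisher information relative to a semiconvex reference on $\bar\dbR^N$.

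The main obstacle will be the last point in case (ii.b), since the reference measure in $\bar\pi\mapsto D(\bar\pi)$ depends on $\bar\pi$ itself, so off-the-shelf LSC results against a fixed reference do not apply directly. I would handle this by first proving that $e^{-\frac{2}{\si^2}\frac{\d F}{\d \nu}(\bar\pi_{s_k},\cdot)}\sfm$ converges in total variation (or relative entropy) to its $\bar\pi_\infty$ counterpart, using the joint continuity and $p$-polynomial growth of $\frac{\d F}{\d \nu}$ together with the uniform moment bound, and then invoking a stability result for relative Fisher information under joint convergence of argument and reference. A secondary subtlety is making the lower bound on $V$ fully rigorous, since $H(\bar\pi_t)$ can blow down if $\bar\pi_t$ concentrates; this is addressed by the standard bound $H(\bar\pi)\ge -C-C\int|\bar x|^2\bar\pi(d\bar x)$ combined with the dissipative moment estimate.
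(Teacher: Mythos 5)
Your route is genuinely different from the paper's: you try to conclude directly from finiteness of the integrated dissipation $\int_0^\infty D(\bp_t)\,dt<\infty$ together with $\cW_2$-lower semicontinuity of the dissipation functional $D$, whereas the paper runs LaSalle's invariance principle (\cref{prop:Lasalle}) combined with the HWI inequality. Concretely, the paper first shows that a minimizer $\bp^*$ of $V$ over the $\o$-limit set lies in $\cI$, using \emph{backward} invariance to realize $\bp^*$ as a time-$t$ marginal of the flow along which $V$ is constant, so that the dissipation in \cref{eq:GD} vanishes; it then applies HWI with the log-semiconcave Gibbs reference $\propto e^{-\frac{2}{\si^2}\frac{\d F}{\d\nu}(\bp^*,\cdot)}$ to get $H(\bp_{t_n}|\bp^*)\to 0$, hence $\lim_t V(\bp_t)=V(\bp^*)$ and $V$ is constant on $\o(\bp_0)$, after which the backward-invariance argument applies to every cluster point. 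Note that hypotheses (ii.a)/(ii.b) are there to make HWI applicable (fibrewise, resp.\ on the product space); semiconvexity is not what makes Fisher information lower semicontinuous, so your use of it is not where the paper needs it.

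As written, your argument has concrete gaps. First, choosing $s_k\in[t_k-1,t_k+1]$ with $D(\bp_{s_k})\to 0$ does not give $\bp_{s_k}\to\bp_\infty$: continuity of $t\mapsto\bp_t$ says nothing about offsets that do not shrink. You need windows of width $\delta_k\to 0$ (with a diagonal argument to keep the averaged dissipation small) plus a uniform-in-$t$ H\"older estimate on the flow in $\cW_2$. Second, and more seriously, the claim that strict positivity of the densities $\pi_t(\cdot|\sfy)$ is ``inherited by $\bp_\infty$'' is unjustified: a $\cW_2$-limit of measures with positive densities need not even be absolutely continuous. This is exactly what the paper's backward-invariance step resolves (every $\o$-limit point equals $\cS(t)[\nu]$ for some $\nu\in\o(\bp_0)$, so \cref{lem:finit_ent}(ii) applies to it); without it, $D(\bp_\infty)=0$ only yields the first-order identity $\bp_\infty$-a.e., which is weaker than membership in $\cI$. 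Third, the lower semicontinuity of $D$ --- the crux of your plan --- is left essentially as a black box: the reference measure moves with $\bp$, the conditional laws $\pi_{s_k}(\cdot|\sfy)$ need not converge fibrewise in case (ii.b), and the integration by parts in the cross term needs a growth bound on $\nabla^2_x\frac{\d F}{\d\nu}$ that you do not extract from the hypotheses. These points can plausibly be repaired (freeze the reference at $\bp_\infty$ using locally uniform convergence of $\nabla_x\frac{\d F}{\d\nu}(\bp_{s_k},\cdot)$ and uniform integrability from the $q>2$ moment bound, then use the variational, relaxed form of the partial relative Fisher information, whose vanishing characterizes the Gibbs measure and sidesteps the positivity issue), but none of this is carried out, so the proof is not complete as it stands.
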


\begin{remark}[The limit set and the mean-field equilibria on the environment] 
Consider the case where the probability measure on the environment $\sfm$ is atomless. In particular, for a fixed $\sfy\in \dbY$ the probability $\bar\pi\in \Pi$ does not depend on $\pi(\cdot |\sfy )$. Therefore the equation in \cref{eq:limitset} is a sufficient and necessary condition for 
\be\label{nash_x0}
\pi(\cdot|\sfy) = \argmin_{\nu}  \left( \int  \frac{\d F}{\d \nu}(\bar\pi, x ,\sfy) \nu(dx) + \frac{\si^2}{2} H(\nu|{\rm Leb} )\right),
\q\mbox{for $\sfm$-a.s. $\sfy$},
\ee
where $H(\cd|{\rm Leb})$ is the relative entropy with respect to the Lebesgue measure. 
If we view the variable $\sfy$ as the index of the `players', \cref{nash_x0} indicates that all $\bar\pi\in \cI$ are (mean-field) Nash equilibria of the game where the $\sfy$-player aims at:
\bee
\inf_{\nu}  \left( \int  \frac{\d F}{\d \nu}(\bar\pi, x ,\sfy) \nu(dx) + \frac{\si^2}{2} H(\nu|{\rm Leb})\right).
\eee
\end{remark}

\begin{corollary}\label{thm:convex_1P}
If the function $V$ is convex, the limit set $\cI$ is a singleton and thus the marginal laws $(\bar\pi_t)_{t\ge 0}$ converge in $\cW_2$ to the minimizer of $V$. 
\end{corollary}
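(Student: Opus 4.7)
\medskip
\noindent\textbf{Proof plan.}
The plan is to combine three ingredients already established in the paper: the Lasalle-type identification of cluster points from \cref{thm:invariant}, the converse direction of the first order condition \cref{thm:FOC}, and the strict convexity of the relative entropy.

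First, I would invoke \cref{thm:invariant} to conclude that every $\cW_2$-cluster point of the marginal flow $(\bar\pi_t)_{t\ge 0}$ lies in the set $\cI$ defined in \cref{eq:limitset}. Next, I would observe that the defining equation of $\cI$ is precisely the first order equation \cref{eq:necessary} with $\eta=\si^2/2$, so the converse direction of \cref{thm:FOC}, applicable thanks to the convexity of $F$ (which is the natural reading of the assumption that $V=F+\frac{\si^2}{2}H$ is convex, since $H$ is itself convex), ensures that every $\bar\pi\in\cI$ is a global minimizer of $V$ over $\Pi$. Hence $\cI\subseteq \argmin_\Pi V$.

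Uniqueness then comes from strict convexity: the relative entropy $H(\cdot\,|\,\mathrm{Leb}\times\sfm)$ is strictly convex on $\Pi$, so $V$ is strictly convex on $\Pi$, and therefore $\argmin_\Pi V$ contains at most one element. Combining with the previous step, $\cI$ contains at most one point. To promote this cluster-point characterization into full $\cW_2$-convergence, I need tightness of $(\bar\pi_t)_{t\ge 0}$ in $\cW_2$. Applying It\^o's formula to $|\bar X_t|^q$ for some $q>2$ (available thanks to $\bar\pi_0\in \cup_{q>2}\cP_q$) and invoking the dissipativity estimate \cref{assum:dissipative}, the negative contribution from the drift absorbs, via Gronwall, the fluctuation term $\frac{q(q-1)}{2}\si^2 \dbE[|\bar X_t|^{q-2}]$, yielding $\sup_{t\ge 0}\dbE[|\bar X_t|^q]<\infty$. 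This uniform moment bound of order strictly greater than two gives relative compactness in $\cW_2$, so at least one $\cW_2$-cluster point exists, forcing $\cI=\{\bar\pi^*\}$ to be non-empty, and since every cluster point equals $\bar\pi^*$ the full flow $(\bar\pi_t)$ converges to $\bar\pi^*$ in $\cW_2$.

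The main obstacle is essentially the bookkeeping at the convexity step: one must verify that the assumption of convexity of $V$ is enough to trigger the converse direction of \cref{thm:FOC} (which is stated under convexity of $F$) and simultaneously to ensure strict convexity of $V$ from the strict convexity of $H$. The uniform-in-time moment bound needed for tightness is the only additional analytic input, and it is standard under the dissipativity assumption \cref{assum:dissipative} already in force; every other step is a direct chain of implications from the preceding theorems.
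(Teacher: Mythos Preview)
Your overall strategy---apply \cref{thm:invariant} to get $\omega(\bar\pi_0)\subset\cI$, identify $\cI$ with $\argmin_\Pi V$ via the converse direction of \cref{thm:FOC}, and invoke strict convexity for uniqueness---is exactly what the corollary intends, and the uniform $q$-moment bound you describe is already established in the lemma yielding \cref{eq:uniformmoment}, so you can simply cite \cref{prop:Lasalle} for the non-emptiness and $\cW_2$-compactness of $\omega(\bar\pi_0)$ rather than re-deriving it.

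There is, however, a logical slip in your convexity bookkeeping. You claim that convexity of $F$ is ``the natural reading'' of the hypothesis that $V$ is convex, because $H$ is convex; but $V$ convex and $H$ convex do \emph{not} imply $F$ convex (take $F=-\tfrac12 H$), and for the same reason $V$ convex together with $H$ strictly convex does not automatically force $V$ to be strictly convex. The cleanest repair is to read the hypothesis as ``$F$ convex''---this is consistent with how the paper states the converse of \cref{thm:FOC} and with \cref{cor:uniq_monot}(i)---so that $V=F+\tfrac{\si^2}{2}H$ is genuinely strictly convex and both of your steps go through verbatim. Alternatively, under the literal hypothesis ``$V$ convex'' you can still show every $\bar\pi^*\in\cI$ minimizes $V$ by rerunning the sufficient-condition computation of \cref{thm:FOC} directly on $V$ (since $\pi^*(\cdot|\sfy)>0$ everywhere, the one-sided directional derivative of $V$ at $\bar\pi^*$ exists and vanishes), and then obtain uniqueness from the relative-entropy identity in the proof of \cref{cor:uniq_monot}(ii). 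Either route closes the gap; just do not rely on the false implication you wrote.
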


\section{Applications}\label{sec:app}

\subsection{Dynamic games and deep neural networks}

As mentioned in \cref{eg:dynamicgame}, both discrete-time and continuous-time dynamic games  can be viewed as games on the random environment. 

Take the continuous-time dynamic game as an example, in particular $\dbY=[0,T]$. Consider the controlled process of the $i$-th player 
\be\label{eq:controlledproc}
d\Theta^i_\sfy = \int \f^i \big( x^i, \bar \pi^{-i}, \Theta_\sfy^i, \sfy\big) \pi^i(dx^i | \sfy) d\sfy.
\ee
and his objective function 
\bee
F^i(\bp^i, \bp^{-i}) := \int_0^T c^i \big( x^i, \bar \pi^{-i}, \Theta_\sfy^i, \sfy\big) \pi^i(dx^i|\sfy) d\sfy + g^i(\Theta^i_T). 
\eee
Define the Hamiltonian function $H^i(x^i, \mu, \th^i, \sfy, p) := c^i(x^i, \mu, \th^i, \sfy) + p \cd \f^i (x^i, \mu, \th^i, \sfy) $.  Assume that 
\begin{itemize}
\item the coefficients $\f^i, c^i $ are uniformly Lipschitz in $(x^i, \th^i)$;

\item $\f^i, c^i, g^i$ are continuously differentiable in $\th^i$;

\item  $\nabla_{\th^i} \f^i, \nabla_{\th^i} c^i $ are uniformly Lipschitz in $(x^i, \th^i)$, and $\nabla_{\th^i} g^i$ is uniformly Lipschitz in $x^i$.
\end{itemize}
\no It follows from a standard variational calculus that 
\bee
\frac{\d F^i}{\d \nu}(\nu, \mu,x^i, \sfy) = H^i(x^i, \mu, \Th^i_\sfy, \sfy, P^i_\sfy),
\eee
where $\Th^i$ follows the dynamics \cref{eq:controlledproc} and $P^i$ is the solution to the linear ODE:
\bee
P^i_\sfy = \nabla_\th g(\Th^i_T) + \int_\sfy^T \nabla_\th H^i(x^i, \mu, \Th^i_\sfy, \sfy, P^i_\sfy)  \pi^i(dx^i | \sfy) d\sfy.
\eee
Therefore, according to \cref{thm:contraction}, the Nash equilibrium of this dynamic game can be approximated by the marginal law of the MFL system.

In case the number of players $n=1$, the marginal laws of the MFL system approximates the minimizer of the optimization.  There is a rising interest in modeling the forward propagation of the deep neural networks using a controlled dynamics and in connecting the deep learning to the optimal control problems, see e.g. \cite{CLT19, CRBD19, EHL19,JSS19,  LM19, LMLLY20}.  For the controlled processes in the particular form \cref{eq:controlledproc},  we refer to Section 4 in \cite{HKR19} for the connection between the optimal control problem and the deep neural networks. In particular, we remark that the backward propagation algorithm is simply a discretization of the corresponding MFL dynamics.

\subsection{Linear-convex zero-sum game and GAN}

Consider the zero-sum game between two players, i.e. $F^1 = -F^2 (=:F)$. For $F$ satisfying the assumption in \cref{thm:contraction} so that the contraction result \eqref{eq:contractionresult} holds true, we may use the following MFL system to approximate the unique Nash equilibrium:
\bea
\begin{cases}\label{eq:zerosumMFL2}
 d X^1_t = - \nabla_{x^1} \frac{\d F}{\d \nu} (\bar\pi^1_t , \bar\pi^2_t, X^1_t, Y)dt + \si dW^1_t,\\
 d X^2_t =  \nabla_{x^2} \frac{\d F}{\d \nu} (\bar\pi^2_t , \bar\pi^1_t, X^2_t, Y)dt + \si dW^2_t.
\end{cases}
\eea

Now we consider a particular subclass of the zero-sum games. Assume that $F:(\nu, \mu)\in \Pi^1\times \Pi^2\mapsto F(\nu, \mu)\in \dbR$ is linear in $\mu$ and convex in $\nu$,  and define $\tilde V(\nu,\mu):= F(\nu,\mu) + \frac{\si^2}{2} \big(H(\nu) -H(\mu)\big)$. In particular, 
\begin{itemize}
\item $\frac{\d F}{\d \mu}$ does not depend on $\mu$;
\item the function $\Phi:\nu\mapsto\max_{\mu\in \Pi^{2}} \tilde V(\nu, \mu) - \frac{\si^2}{2} H(\nu)$ is convex.
\end{itemize}
If the Nash equilibrium exists, denoted by $\bar\pi^*$, by the standard argument we have

\bee
\min_{\nu\in \Pi^{1}}\max_{\mu\in \Pi^{2}} \tilde V(\nu, \mu) = \tilde  V(\bar\pi^{*,1}, \bar\pi^{*,2})=\max_{\mu\in \Pi^{2}} \min_{\nu\in \Pi^{1}} \tilde V (\nu, \mu).
\eee
It follows from \cref{thm:FOC} that $\mu^*[\nu] := \argmax_{\mu\in \Pi^{2}} \tilde V(\nu, \mu)$ has the explicit density 
\bee
\frac{ \mu^*[\nu]  (x^2 , d\sfy)}{\sfm(d\sfy)}= C(\nu, \sfy) e^{-\frac{2}{\si^2} \frac{\d F}{\d \mu}(\nu, x^2, \sfy)},
\eee
where $C(\nu, \sfy)$ is the normalization constant. Further, assume that the function $\Phi(\nu) = \tilde V(\nu, \mu^*[\nu])$ satisfies the assumption of \cref{thm:invariant} and recall that $\Phi$ is convex, it follows from \cref{thm:convex_1P} that we may approximate the minimizer $\bp^{*,1}$ using the dynamics:
\be\label{zerosumMFL}
dX_t = -\nabla_x\frac{\d \Phi}{\d \nu} (\bp_t, X_t, Y) dt + \si dW_t. 
\ee
Compared to the dynamics \eqref{eq:zerosumMFL2}, the dynamics \cref{zerosumMFL} enjoys the natural Lyapunov function $\tilde V$, i.e. $ t \mapsto \tilde V( \bar \pi_t ) $ decreases monotonically.
\ms

As an application, the generative adversarial networks (GAN) can be viewed as a linear-convex game. Given a bounded, continuous, non-constant activation function $\f$, consider the parametrized functions 
\be\label{eq:parafunGAN}
\{z\mapsto \dbE[\f(X,z)]: ~{\rm Law}(X) = \nu\in \cP_2(\dbR^{n^2}) \}
\ee
as the options of the discriminators. The regularized GAN aims at computing the Nash equilibrium of the game:
\beaa
&\mbox{Define}\q \tilde V(\nu, \mu):= - \int   \dbE[\f(X,z)]  (\mu-\hat\mu)(dz) - \frac12 \l \Big(\int |z|^2 \mu(dz) - \dbE[|X|^2] \Big) - \frac{\si^2}{2}\big(H(\mu)-H(\nu)\big)&\\
&\begin{cases}
{\rm Generator:} & \sup_{\mu\in \cP_2(\dbR^{n^1})} \tilde V(\nu, \mu) \\
{\rm Discriminator:} & \inf_{\nu\in \cP_2(\dbR^{n^2})} \tilde V(\nu, \mu) 
\end{cases},&
\eeaa
where $\hat\mu\in \cP_2(\dbR^{n^1})$ is the distribution of interest. Indeed, in order to compute the Nash equilibrium of the game, it is appealing to sample the MFL dynamics \eqref{eq:zerosumMFL2} and approximate its invariant measure. However, in order that the contraction result in  \cref{thm:contraction} holds true, it is crucial that the MFL system has small dependence on the marginal laws, which is not necessarily true in the context of GAN.  Here we present another approach, which exploits the particular structure of the linear-convex game. As discussed before, the optimizer of the generator given $\nu\in \cP_2(\dbR^{n^2})$ is explicit and has the density:
\be
\label{generator_explicit}
\mu^*[\nu](z) = C(\nu) e^{-\frac{2}{\si^2} \big( \dbE[\f(X,z)] +\frac{\l}{2} |z|^2 \big)}.
\ee
Further for the potential function $\Phi(\nu):= \tilde V(\nu, \mu^*[\nu]) - \frac{\si^2}{2} H(\nu)$ we have
\be
\label{discrim_formula}
\frac{\d \Phi}{\d \nu}(\nu, x)= -\int \f(x, z) (\mu^*[\nu]-\hat\mu)(dz) + \frac{\l}{2}|x|^2.
\ee
Then the strategy of the discriminator in the Nash equilibrium can be approximated by the MFL dynamics \cref{zerosumMFL}.  In the perspective of numerical realization, note that the law $\mu^*[\nu]$ can be simulated by the MCMC algorithms such as Metropolis-Hastings. 

\ms

In order to illustrate the advantage of the algorithm using the MFL dynamics \cref{zerosumMFL}, here we present the numerical result for a toy example. We are going to use the GAN to generate the samples of the exponential distribution with intensity $1$. In this test, the optimal response of the generator, $\mu^*[\nu]$, is computed via Metropolis algorithm with Gaussian proposal distribution with zero mean and variance optimised according to \cite{metropolis_efficient}. The discriminator chooses parametrized functions among \cref{eq:parafunGAN}, where 
\bee
\f( X, z ) = C ( A z + b )^+,\q  \text{ with } X = ( C, A, b ).
\eee
When we numerically run the MFL dynamics  \cref{zerosumMFL} to train the discriminator, we use a $3000$-particle system, that is, the network is composed of $3000$ neurones, and set its initial distribution to be standard Gaussian.  The other parameters are chosen as follows: $ \si = 0.4 $, $ dt = 0.01 $, $ \l = 0.2 $. \Cref{fig:exponential} shows the training result after $ 60 $ iterations. 
\begin{figure}[h]
\centering
\begin{subfigure}{.5\textwidth}
  \centering
  \includegraphics[width=.9\linewidth]{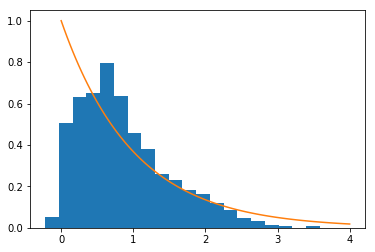}
  \caption{Histogram of GAN's sampling and the target density}
  \label{fig:sub1}
\end{subfigure}%
\begin{subfigure}{.5\textwidth}
  \centering
  \includegraphics[width=.9\linewidth]{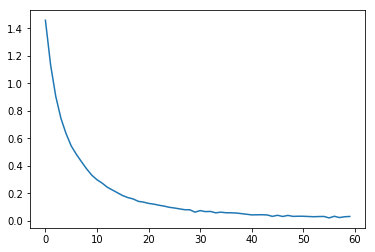}
  \caption{Training error}
  \label{fig:sub2}
\end{subfigure}
\caption{Learning via MCMC-GAN: histogram of learned distribution and training error.}
\label{fig:exponential}
\end{figure}
 In particular, we see that the training error decreases monotonically as suggested by our theoretical results.

\section{Proofs}\label{sec:proof}

\subsection{Optimization with marginal constraint}

\begin{proof}[Proof of \cref{thm:FOC}] \q  \underline{Necessary condition} ~ {\it Step 1.}\q Let $\bar\pi^*\in \Pi$ be a minimizer of $V$. Since $H(\bp^*)<\infty$, the probability measure $\bp^*$ is absolutely continuous wrt ${\rm Leb}\times \sfm$. Take any probability measure $\bp\in \Pi $ such that $H(\bp)<\infty$, in particular $\bp$ is also absolutely continuous wrt ${\rm Leb}\times \sfm$. Denote the convex combination by $\bp^\e: = \e \bp + (1-\e) \bp^* \in \Pi$. Define the function $h(z): = z\ln z$ for $z\in\dbR^+$ and $h(0)=0$. Then
\beaa
0\le \frac{V(\bp^\eps)-V(\bp^*)}{\eps} &=& \frac{F(\bp^\eps)-F(\bp^*)}{\eps} + \eta\int\frac{h(\pi^\eps( x|\sfy))  - h(\pi^*( x|\sfy))}{\e} dx \sfm(d\sfy)  \\
&= &\frac{1}{\eps}\int_{0}^{\eps}\int\frac{\delta F}{\delta \bp}(\bp^\l,\bar x)(\bp-\bp^*)(d\bar x) d\l 
	+ \eta\int\frac{h(\pi^\eps( x|\sfy))  - h(\pi^*( x|\sfy))}{\e} dx \sfm(d\sfy) .
\eeaa
Since $\sup_{\l\in [0,\e]}|\frac{\delta F}{\delta \bp}(\bp^\l,\bar x)| \le C(1+|\bar x|^p)$ and $\bp,\bp^*\in \Pi$, by the dominated convergence theorem
\bee
\lim_{\e\rightarrow 0} \frac{1}{\eps}\int_{0}^{\eps}\int\frac{\delta F}{\delta \bp}(\bp^\l,\bar x)(\bp-\bp^*)(d \bar x) d\l 
= \int\frac{\delta F}{\delta \bp}(\bp^*,\bar x)(\bp-\bp^*)(d \bar x).
\eee
Since the function $h$ is convex, we have $\frac{h(\pi^\eps( x|\sfy))  - h(\pi^*( x|\sfy))}{\e} \le h(\pi( x|\sfy))  - h(\pi^*( x|\sfy))$. Note that $\int \big(h(\pi( x|\sfy))  - h(\pi^*( x|\sfy))\big) d x\sfm(\sfy) = H(\bp)-H(\bp^*)<\infty$. By Fatou lemma, we obtain
\beaa
\limsup_{\e\rightarrow 0}\int\frac{h(\pi^\eps( x|\sfy))  - h(\pi^*( x|\sfy))}{\e} d  x \sfm(d\sfy)
&\le& \int \lim_{\e\rightarrow 0} \frac{h(\pi^\eps( x|\sfy))  - h(\pi^*( x|\sfy))}{\e}  d  x \sfm(d\sfy) \\
&=&\int \ln\pi^*(x|\sfy) (\bp-\bp^*)(d\bar x).
\eeaa
Therefore we have
\be\label{FOCineq}
0\le \limsup_{\e\rightarrow 0}\frac{V(\bp^\eps)-V(\bp^*)}{\eps} 
\le \int \left( \frac{\delta F}{\delta \bp}(\bp^*,\bar x) + \eta\ln\pi^*(x|\sfy)  \right)(\bp-\bp^*)(d\bar x).
\ee

\vspace{3mm}

\noindent {\it Step 2.}\q  We are going to show that for $\sfm$-a.s. $\sfy$ 
\bea\label{eq:nece_constant}
\Xi_{\sfy}(x):= \frac{\delta F}{\delta \bp}(\bp^*,\bar x) + \eta \ln\pi^*( x|\sfy) \q\mbox{is equal to a constant $\pi^*(\cd |\sfy)$-a.s.} 
\eea
Define the mean value $\ol c(\sfy) : = \int_{\dbR^d} \Xi_{\sfy}(x) \pi^*(dx|\sfy)$ and let $\e, \e'>0$. Consider the probability measure $\bp\in \Pi$ absolutely continuous  wrt $\bp^*$ such that
\beaa
\frac{d\pi (\cd|\sfy)}{d\pi^*(\cd|\sfy)} = 
\begin{cases}
1, & \mbox{for $\sfy$ such that} ~\pi^*\big(\Xi_{\sfy}\le \ol c(\sfy) -\e\big| \sfy\big) <\e' \\
\frac{1_{\Xi_{\sfy} \le \ol c(\sfy) -\e}}{\pi^*\big(\Xi_{\sfy} \le \ol c(\sfy) -\e\big| \sfy\big) }, & \mbox{otherwise}
\end{cases}.
\eeaa
Since $\frac{d\bp}{d\bp^*}$ is bounded, we have that $\bp\in \Pi$ and $H(\bp)<\infty$. In particular \cref{FOCineq} holds true for this $\bp$.
Also note that $\Xi_{\sfy} \le \ol c(\sfy)-\e$, $\pi(\cd| \sfy)$-a.s. for $\sfy$ such that $\pi^*\big(\Xi_{\sfy} \le \ol c(\sfy) -\e\big| \sfy\big) \ge \e'$. So we have
\beaa
0&\le &\int_\dbY \int_{\dbR^d} \Xi_{\sfy}(x) \big(\pi(dx|\sfy) -\pi^*(dx |\sfy)\big)\sfm(d\sfy) \\
&=&\int_{\pi^*\big(\Xi_{\sfy}\le \ol c(\sfy) -\e\big| \sfy\big) \ge \e'} \int_{\dbR^d} \Xi_{\sfy}(x) \big(\pi(dx|\sfy) -\pi^*(dx |\sfy)\big)\sfm(d\sfy) \\
& = & \int_{\pi^*\big(\Xi_{\sfy}\le \ol c(\sfy) -\e\big| \sfy\big) \ge \e'}\left( \int_{\dbR^d} \Xi_{\sfy}(x) \pi(dx|\sfy) -\ol c(\sfy)\right)\sfm(d\sfy) \\
&\le& -\e ~\sfm\left\{\sfy:\pi^*\big(\Xi_{\sfy}\le \ol c(\sfy) -\e\big| \sfy\big) \ge \e'\right\}.
\eeaa
Therefore we conclude that
$\pi^*\big(\Xi_{\sfy}\le \ol c(\sfy) -\e\big| \sfy\big) <\e'$ for $\sfm$-a.s. $\sfy$.
Since this is true for arbitrary $\e', \e>0$, we obtain \eqref{eq:nece_constant}.

\vspace{3mm}
\noindent{\it Step 3}.\q We are going to show that $\bp^*$ is equivalent to  ${\rm Leb}\times \sfm$, so that $\Xi_\sfy$ does not depend on $x$, ${\rm Leb}\times \sfm$-a.s. and the first order equation \cref{eq:necessary}  holds true. 
Suppose the opposite, i.e. there is a set $\cK\in \bar\dbR^d$ such that $\bp^*(\cK)=0$ and ${\rm Leb}\times\sfm(\cK)>0$. In particular,  $\ln\pi^*(x|\sfy) = -\infty$ on $\cK$. 
Denote $\cK_\sfy:=\{x\in \dbR^d: (x,\sfy)\in \cK\}$. We may assume that there exist $K>\e>0$ such that $ {\rm Leb}(\cK_\sfy) \in [\e, K]$ for all $\sfy\in \dbY$.
Define a probability measure $\bp\in \Pi$ such that for all Borel-measurable $A\subset \dbR^d$
\bee
\pi (A|\sfy):= \frac12 \pi^*(A|\sfy) + \frac{1}{2 {\rm Leb}(\cK_\sfy)}\int_{A\cap \cK_\sfy} dx.
\eee
It is easy to verify that  $H(\bp)<\infty$, so \cref{FOCineq} holds true and it implies
\beaa
0&\le& \frac12 \int_\cK \left( \frac{\delta F}{\delta \bp}(\bp^*,\bar x) + \eta\ln\pi^*(x|\sfy)  \right)\bp(d\bar x) - \frac12\int \left( \frac{\delta F}{\delta \bp}(\bp^*,\bar x) + \eta\ln\pi^*(x|\sfy)  \right) \bp^*(d\bar x)\\
&\le & -\infty + \int C(1+|\bar x|^p)  \bp^*(d\bar x) -\frac{\eta}{2} H(\bp^*) =-\infty.
\eeaa
It is a contradiction, so $\bp^*$ is equivalent to  ${\rm Leb}\times \sfm$.

\vspace{5mm}

\noindent\underline{Sufficient condition}~ Assume that $F$ is convex. Let $ \bp^*\in \Pi $  satisfy the first order equation \cref{eq:necessary}, in particular, $\bp^*$ is equivalent to ${\rm Leb}\times \sfm$. 
Take any $ \bp\in \Pi  $ absolutely continuous wrt ${\rm Leb}\times \sfm$ (otherwise $ V(\bp)=+\infty $), and thus absolutely continuous wrt the measure $\bp^*$. 
Define $\bp^\eps := (1-\eps)\bp^* + \eps \bp$ for $\eps>0$. 
By the convexity of $F$ we obtain
\beaa
F(\bp) - F(\bp^*)
&\ge& \lim_{\e\rightarrow 0} \frac{1}{\e} \big(F(\bp^\e) - F(\bp^*)\big) \\
&=&\lim_{\e\rightarrow 0} \frac{1}{\e}\int_0^\e \int_{\bar\dbR^d} \frac{\d F}{\d \bp} (\bp^\l, \bar x)  (\bp -\bp^*) (d\bar x)d\l
=\int_{\bar\dbR^d} \frac{\delta F}{\delta \bp}(\bp^*,\bar x ) (\bp -\bp^*) (d\bar x).
\eeaa
The last equality is due to the dominated convergence theorem. On the other hand, by convexity of the function $ h $,
\bee
H(\bp) - H(\bp^*) \ge \int_{\bar\dbR^d} \ln\pi^*(x|\sfy) (\bp -\bp^*)(d\bar x).
\eee
Hence 
\bee
V(\bp)-V(\bp^*)
\geq \int_{\bar\dbR^d} \bigg(\frac{\delta F}{\delta \bp}(\bp^*,\cdot) +\eta \ln \pi^*(x|\sfy)  \bigg)(\bp -\bp^*) (d\bar x) = 0,
\eee
so $\bp^*$ is a minimizer.
\end{proof}

\subsection{Equilibria of game}

\begin{proof}[Proof of \cref{cor:uniq_monot}] \q 
{\rm (i)}\q Let $n=1$. Take three probability measures $\bp,\bp', \bp''\in \Pi$ such that $\bp =\frac12(\bp'+\bp'')$. Denote $\bp'^{\l}:= \l \bp+ (1-\l)\bp'$ and $ \bp''^{\l}:=\l \bp+ (1-\l) \bp''$. By the definition of the linear derivative of $F$ we obtain
\begin{equation*}
F(\bp') - 2 F(\bp)+F(\bp'') = \int_0^1 \int_{\bar\dbR^N} \frac{\d F}{\d \nu} (\bp'^\l, \bar x ) (\bp'-\bp)(d\bar x) d\l
					+ \int_0^1 \int_{\bar\dbR^N} \frac{\d F}{\d \nu} (\bp''^\l, \bar x ) (\bp''-\bp)(d\bar x) d\l.
\end{equation*}
Note that $\bp' - \bp = \bp-\bp'' =\frac{1}{2-2\l} \big( \bp'^\l-\bp''^\l\big)$. Therefore we have
\bee
F(\bp') - 2 F(\bp)+F(\bp'') =  \int_0^1 \frac{1}{2-2\l} \int_{\bar\dbR^N} \left( \frac{\d F}{\d \nu} (\bp'^\l, \bar x )-\frac{\d F}{\d \nu} (\bp''^\l, \bar x ) \right) (\bp'^\l-\bp''^\l) (d\bar x) d\l \geq 0.
\eee
Finally note that $\l\mapsto (F(\bp'^\l), F(\bp''^\l))$ is continuous. So $F$ satisfying the monotonicity condition must be convex on $\Pi$.

On the other hand, suppose $F$ is convex on $\Pi$.  Following a similar computation, we  obtain
\begin{equation*}
0\le \frac{1}{\e} \Big( F(\bp')- F(\bp'^\e) - F(\bp''^\e) + F(\bp'') \Big)\\
= \frac{1}{\e}\int_0^\e \frac{1}{2} \int_{\bar\dbR^N}  \left( \frac{\d F}{\d \nu} (\bp'^\l, \bar x )-\frac{\d F}{\d \nu} (\bp''^\l, \bar x ) \right) (\bp'-\bp'') (d\bar x) d\l.
\end{equation*}
Let $\bp', \bp''\in \Pi$. It follows from the dominated convergence theorem that
\beaa
0 &\le& \lim_{\e\rightarrow 0}  \frac{1}{\e}\int_0^\e \frac{1}{2} \int_{\bar\dbR^N}  \left( \frac{\d F}{\d \nu} (\bp'^\l, \bar x )-\frac{\d F}{\d \nu} (\bp''^\l, \bar x ) \right) (\bp'-\bp'') (d\bar x) d\l\\
&=& \frac12 \int_{\bar\dbR^N}  \left( \frac{\d F}{\d \nu} (\bp', \bar x )-\frac{\d F}{\d \nu} (\bp'', \bar x ) \right) (\bp'-\bp'') (d\bar x).
\eeaa
So $F$ satisfies the monotonicity condition on $\Pi$.

\vspace{3mm}

\noindent {\rm (ii)}\q Let $ \bar\pi^* \in \Pi$ be a Nash equilibrium of the game.
Then, by \cref{cor:NC_Nash} we have that for every $ i $ there exists a function $ f^i : \dbY \rightarrow \dbR$ 
\bee
\frac{ \delta F^i }{ \delta \nu } ( ( \bar \pi^* )^{i}, ( \bar \pi^* )^{-i}, x^i, \sfy ) + \frac{ \sigma^2 }{ 2 } \ln( ( \pi^* )^i( x^i | \sfy )  ) = f^i( \sfy ),
\q\mbox{for $\sfm$-a.s. $\sfy$.}
\eee
Let $ \bar \pi^*, \bar \pi'^* \in \Pi$ be Nash equilibriums. Then monotonicity condition \cref{eq:monotonicity} implies
\bee
\sum_{ i = 1 }^n \int\limits_{ \bar \dbR^N } \left( - \frac{ \si^2 }{ 2 } \ln( (\pi^*)^i( x^i | \sfy ) ) + \frac{ \si^2 }{ 2 } \ln( (\pi'^*)^{i}( x^i | \sfy ) )  \right) ( \bar \pi^* - \bar \pi'^* )( d \bar x ) \geq 0,
\eee
because the marginal distributions of $ \bar \pi^* $ and $ \bar \pi'^* $ on $ \dbY $ coincide. 
The latter inequality can be rewritten
\beaa
0 &\le & - \sum_{ i = 1 }^n \int\limits_{ \bar \dbR^{ n^i } } \ln \left( \frac{ ( \pi^* )^i( x^i | \sfy )  }{ ( \pi'^* )^i( x^i | \sfy ) } \right) ( ( \bar \pi^*)^i - ( \bar \pi'^* )^i )( d \bar x^i )\\
& = & -\sum_{ i = 1 }^n \left( H( ( \bar \pi^* )^i | ( \bar\pi'^* )^i ) + H( ( \bar \pi'^* )^i | ( \bar \pi^* )^i ) \right).
\eeaa
This is only possible if $ ( \bar \pi^* )^i = ( \bar \pi'^* )^i $ for all $ i = 1, \ldots, n $. 
\end{proof}

\vspace{3mm}

\begin{proof}[Proof of \cref{thm:exist_nash}] \q  For $\bp\in \Pi$ denote $\cR^i(\bp):=\argmin_{\nu\in \Pi^i} V^i(\nu, \bp^{-i})$ for $i=1,\cds,n$, and define
\bee
\cR(\bp):=\{\bp'\in \Pi : ~\bp'^i \in \cR^i(\bp)~\mbox{for $i=1, \cds, n$}\}.
\eee 

\noindent{\it Step 1.}\q First we prove that $\cR(\bp)$ is $\cW_p$-compact. For any $\nu^i \in \cR^i(\bp)$, it follows from the first order equation \cref{eq:necessary} that
\bee
\nu^i( \bar x^i) = C( \nu^i, \bp^{-i}) e^{-\frac{2}{\si^2} \frac{\d F^i}{\d \nu}(\nu^i, \bp^{-i}, \bar x^i)},
\eee
where $C( \nu^i, \bp^{-i})>0$ is the normalization constant so that $\nu^i$ is a probability measure. Take a $p'>p$. The condition \eqref{basic_dissipative} implies that $C( \nu^i, \bp^{-i})$ is uniformly bounded as well as
\bee
\int_{\bar\dbR^{n^i}} |\bar x^i |^{p'} \nu^i(d\bar x^i)
\le C( \nu^i, \bp^{-i}) \int_{\bar\dbR^{n^i}} |\bar x^i |^{p'} e^{- C'|\bar x^i|^q +C} dx^i \sfm(d\sfy).
\eee
So we have
\bee
\ol C^i:= \sup_{\nu^i \in \cR^i(\bp) }\int_{\bar\dbR^{n^i}} |\bar x^i |^{p'} \nu^i(d\bar x^i) <\infty.
\eee
Therefore, $\cR(\bp) \subset  \cE :=\{\nu\in \Pi: \nu^i\in \cE^i \mbox{ for $i=1,\cds,n$}\}$, with $\cE^i:= \{\nu^i\in \Pi^i: \int_{\bar\dbR^{n^i}} |\bar x^i |^{p'} \nu^i(d\bar x^i) \le  \ol C^i\}$, and note that $\cE$ is  $\cW_p$-compact.

%

\vspace{3mm}
\noindent{\it Step 2.}\q We are going to show that the graph of $\bp \in \cE \mapsto \cR(\bp) $ is $\cW_p$-closed,  i.e. given $\bp_m, \bp_\infty\in \Pi\cap \cE$,  $\bp'_m\in  \cR(\bp_m)$ such that $\bp_m\rightarrow\bp_\infty$ in $\cW_p$ and $\bp'_m\rightarrow\bp'_\infty\in  \Pi$, we want to show that $\bp'_\infty\in \cR(\bp_\infty)$. 

Denote the concatenation of two probability measures $\nu^i\in \Pi^i, \mu^{-i}\in \Pi^{-i}$ by
\bee
\nu^i\otimes \mu^{-i} (dx, d\sfy) = \nu^i(x^i|\sfy) \mu^{-i}(x^{-i}|\sfy)dx \sfm (d\sfy).
\eee
Note that for $\bp, \bp'\in \cE$ we have $\bp'^{i}\otimes \bp^{-i}\in \cE$. Since $\cE$ is $\cW_p$-compact, there is a subsequence, still denoted by $(\bp_m ,\bp'_m)$, and $ \bp^*\in \Pi\cap\cE$ such that $\bp'^{i}_m\otimes \bp^{-i}_m \rightarrow \bp^*$  in $\cW_p$ and $\bp^{*,i} = \bp'^{i}_\infty$, $\bp^{*,-i} = \bp^{-i}_\infty$. By the lower-semicontinuity of the mapping: $\bp\mapsto V^i(\bp^i, \bp^{-i})$, we have
\be\label{eq:estimatelow}
V^i(\bp'^i_\infty, \bp^{-i}_\infty) = V^i(\bp^{*, i}, \bp^{*,-i}) \leq \liminf_{m\rightarrow\infty} V^i (\bp'^i_m, \bp^{-i}_m).
\ee
Further, fix $\nu^i\in \Pi^i\cap\cE^i$. 
 Again by the compactness of $\cE$, there is a subsequence, still denoted by $(\nu^{i} ,\bp'_m)$, and $ \bp^\nu\in \Pi\cap\cE$ such that $\nu^{i}\otimes \bp^{-i}_m \rightarrow \bp^\nu$  in $\cW_p$ and $\bp^{\nu,i} = \nu^i$, $\bp^{\nu,-i} = \bp^{-i}_\infty$. Therefore
\bee
\liminf_{m\rightarrow\infty} V^i (\bp'^i_m, \bp^{-i}_m)
\le \liminf_{m\rightarrow\infty} V^i (\nu^{i}, \bp^{-i}_m)
= V^i (\bp^{\nu, i}, \bp^{\nu,-i})
= V^i (\nu^i, \bp^{-i}_\infty).
\eee
Together with \cref{eq:estimatelow}, we conclude that $\bp^{'i}_\infty \in \cR^i(\bp_\infty)$ for all $i$, and thus $\bp'_\infty\in \cR(\bp_\infty)$.

\vspace{3mm}
\noindent{\it Step 3.}\q From the condition of the theorem and the result of {\it Step 1\&2}, we conclude that for any $\bp\in \Pi$ the set $\cR(\bp)$ is non-empty, convex, that the set $\cup_{\bp\in \Pi} \cR(\bp) $ is a subset of a $\cW_p$-compact set, and that the graph of the mapping $\bp\in \cE \mapsto \cR(\bp) \subset \cE$ is $\cW_p$-closed. Therefore, it follows from the Kakutani fixed-point theorem that the mapping $\bp \mapsto \cR(\bp)$ has a fixed point, which is a Nash equilibrium.
\end{proof}

\subsection{Invariant measure of MFL system}

\begin{proof}[Proof of \cref{thm:well}] \q
The proof is based on the Banach fixed point theorem.  Given fixed $(\bp_t), (\bp'_t) \in \ol C_p([0,T], \Pi) $, the SDEs \cref{eq:MFL} apparently have unique strong solutions, denoted by $\bar X, \bar X'$ respectively. Denote by 
 $\hat\bp_t: = {\rm Law}(\bar X_t)$ and $\hat\bp'_t: = {\rm Law}(\bar X'_t)$. We are going to show that the mapping $\Psi: \bp\mapsto \hat\bp$ is a contraction as $T$ is small enough. 
 
First, by the standard SDE estimate we know $\dbE\big[\sup_{t\in [0,T]} |X_t|^p \big]<\infty$, and this implies $(\hat\bp_t), (\hat\bp'_t) \in \ol C_p([0,T], \Pi) $. Note the SDEs for $\tilde X, \tilde X'$ share the same Brownian motion.  Denoting $\d X = X -X'$, we have
\beaa
 |\d X^i_s|^p &=& \left|\int_0^s \left( - \nabla_{x^i} \frac{\d F^i}{\d \nu} (\bar\pi^i_t , \bar\pi^{-i}_t, X^i_t, Y)    + \nabla_{x^i} \frac{\d F^i}{\d \nu} (\bar\pi'^i_t , \bar\pi'^{-i}_t, X'^i_t, Y)dt \right)dt \right|^p\\
 &\le & T^{1-\frac1p} \Big(\int_0^s C\Big(\cW_p^p(\bp^i_t, \bp'^i_t)+  \cW_p^p(\bp^{-i}_t, \bp'^{-i}_t) + |\d X^i_{t'}|^p\Big)dt +\int_0^s C_0  \cW_p^p(\bp^{-i}_t(\cd|Y), \bp'^{-i}_t(\cd|Y)) dt \Big).
\eeaa
Taking expectation on both sides, by the Gronwall inequality we obtain that 
\bee
\dbE\big[|\d X^\sfy_s|^p \big] \le C e^{CT}\int_0^T \Big( \cW^p_p(\bp_t, \bp'_t) + \cW^p_p(\bp_t(\cd|\sfy), \bp'_t(\cd|\sfy))\Big)dt.
\eee
 Note that
$\dbE\big[ |\d X^\sfy_s|^p \big] 
\ge\cW_p^p (\hat\bp_s(\cd|\sfy), \hat\bp'_s(\cd|\sfy))$.
Therefore 
\bee
\cW_p^p (\hat\bp_s(\cd|\sfy), \hat\bp'_s(\cd|\sfy)) \le   C e^{CT}\int_0^T \Big( \cW^p_p(\bp_t, \bp'_t) + \cW^p_p(\bp_t(\cd|\sfy), \bp'_t(\cd|\sfy))\Big)dt.
\eee
Integrating both sides with respect to $\sfm$, we obtain
\bee
\ol\cW_p^p (\hat\bp_t, \hat\bp'_t) \le   C e^{CT}\int_0^T 2\ol \cW^p_p(\bp_t, \bp'_t) dt
\le 2CT e^{CT}\sup_{t\in [0,T]} \ol \cW^p_p(\bp_t, \bp'_t) ,
\eee
 and $\Psi$ is a contraction whenever $T$ is small enough. In case $C_0=0$ the result can be deduced similarly.
\end{proof}

\ms

In order to prove \cref{thm:contraction}, the main ingredient is the reflection coupling  in Eberle \cite{eberle11}. For this mean-field system, we shall adopt the reflection-synchronous coupling as in \cite{eberle2019quantitative}. 

We first recall the reflection-synchronous coupling. Fix a parameter $\e>0$. Introduce the Lipschitz functions ${\rm rc}: \dbR^N \times \dbR^N\rightarrow [0,1]$ and ${\rm sc}: \dbR^N \times \dbR^N\rightarrow [0,1]$ satisfying
\beaa
{\rm sc}^2 (x, x') + {\rm rc}^2(x,x') =1, \q
{\rm rc}(x,x') =1 ~~\mbox{for $|x-x'|\ge \e$}, \q 
{\rm rc}(x,x') =0 ~~\mbox{for $|x-x'|\le \e/2$}.
\eeaa
 Let $\bp_0,\bp'_0\in \Pi\cap \cP_q(\bar\dbR^N)$ with some $q>1$ be two initial distributions of the MFL system \cref{eq:MFL}, and $W, \tilde W$ be two independent $N$-dimensional Brownian motions. It follows from \cref{thm:well} the two MFL systems have strong solutions, and denote the marginal laws by $\bp_t, \bp'_t$. Denote the drift of the dynamics \cref{eq:MFL_sys} by   
 \be\label{eq:drift}
 b^\sfy(t,x) := \left(- \nabla_{x^i} \frac{\d F^i}{\d \nu}(\bp^i_t, \bp^{-i}_t, x^i, \sfy)\right)_{i=1,\cds,n},
 \q \tilde b^\sfy(t,x) := \left(- \nabla_{x^i} \frac{\d F^i}{\d \nu}(\bp'^i_t, \bp'^{-i}_t, x^i, \sfy)\right)_{i=1,\cds,n}
 \ee
 Further, for a fixed $\sfy\in \dbY$,  define the coupling $\Si^\sfy=(X^\sfy, X'^\sfy)$ as the strong solution to the  SDE
 \beaa
 && dX^\sfy_t = b^\sfy(t, X^\sfy_t ) dt + \rc (\Si^\sfy_t) \si dW_t + \syc(\Si^\sfy_t)\sigma d\tilde W_t,\\
 && dX'^\sfy_t  = \tilde b^\sfy(t, X'^\sfy_t) dt + \rc (\Si^\sfy_t) \big({\rm Id} - 2 e_t \langle e_t, \cd\rangle \big)\sigma dW_t + \syc(\Si^\sfy_t)\sigma d\tilde W_t,
 \eeaa
where $e_t:= \frac{X^\sfy_t -X'^\sfy_t}{|X^\sfy_t -X'^\sfy_t| }$ for $X^\sfy_t \neq X'^\sfy_t $, otherwise $e_t:=\hat e$ some arbitrary fixed unit vector in $\dbR^N$.

In the following proof, we will use the concave increasing function $f$ constructed in  \cite[Theorem 2.3]{eberle2019quantitative}:
\begin{equation*}
	f(r) := \int_{0}^{r}\f(s)g(s\wedge R_2) d s,\q
	\mbox{where}\q g(r):= 1-\frac{c}{2}\int_0^r \Phi(s)\f(s)^{-1}ds,
\end{equation*}
and the function $ \f  $ and  the constants $R_2, c$ are defined  in the statement of \cref{thm:contraction}.
In particular, on  $(0,R_2)\cup(R_2,+\infty)$ the function $f$ satisfies
\bea
& 2\sigma^2 f''(r)  \le -r \big(\k(r)+2\g \eta(r)\big) f'(r) - c\sigma^2 f(r) \label{eq:fODE}
\eea
and for $ r\in\dbR^+ $
\bea
& r\f(R_1) \le \Phi (r) \le 2 f(r) \le 2\Phi (r) \le 2r. \label{eq:bound_f}
\eea

\ms
\begin{proof}[Proof of \cref{thm:contraction}] \q Note that $\pi_0(\cd|\sfy),\pi'_0(\cd|\sfy) \in \cP_q(\dbR^N)$ with $q>1$ for $\sfm$-a.s. $\sfy\in \dbY$. For such $\sfy$, we may choose the coupling $ ( X^\sfy_0, X'^\sfy_0 ) $ so that
\bea\label{eq:fWlower}
\cW_1 \big(\pi_0(\cd|\sfy),\pi'_0(\cd|\sfy) \big) = \dbE \big[|X^\sfy_0 - X'^\sfy_0|\big] 
\ge \dbE \Big[f\big(|X^\sfy_0-X'^\sfy_0|\big)\Big]  .
\eea
The last inequality is due to \eqref{eq:bound_f}. On the other hand, for all $t\ge 0$ we have
\bea\label{eq:fWupper}
\cW_1 \big(\pi_t(\cd|\sfy),\pi'_t(\cd|\sfy) \big) \le  \dbE \big[|X^\sfy_t - X'^\sfy_t|\big]
\le  \frac{2}{\f(R_1)}\dbE \Big[f\big(|X^\sfy_t - X'^\sfy_t|\big)\Big]. 
\eea
Denote $\d X^\sfy_t : = X^\sfy_t - X'^\sfy_t $. By the definition of the coupling above, we have
\beaa
d \d X^\sfy_t = \Big(b^\sfy(t, X^\sfy_t) - \tilde b^\sfy(t, X'^\sfy_t)\Big) dt + 2 \rc(\Si^\sfy_t) \sigma e_t d\bar W_t,
\eeaa
where $\bar W_t:=\int_0^t  e_s \cd dW_s $ is a one-dimensional Brownian motion. Denote $r_t := |\d X^\sfy_t|$ and note that by the definition of $\rc$ we have $\rc (\Si^\sfy_t) = 0$ whenever $r_t\le \e/2$. Therefore, 
\beaa
d r_t = e_t \cd \Big( b^\sfy(t, X^\sfy_t)- \tilde b^\sfy(t, X'^\sfy_t)\Big) dt + 2 \rc(\Si^\sfy_t) \sigma d \bar W_t. 
\eeaa
Then it follows from the It\^o-Tanaka formula and the concavity of $f$ that
\beaa
 f(r_t) - f(r_0)
&\le & \int_0^t \left( f'(r_s) e_s \cd \big( b^\sfy(s,X^\sfy_s)- \tilde b^\sfy(s, X'^\sfy_s)\big) + 2\rc(\Si^\sfy_s)^2 \sigma^2 f''(r_s) \right) ds
+ M_t,
\eeaa
where $ M_t:= 2 \int_0^t \rc(\Si^\sfy_s) f' (r_s) \sigma d\bar W_s $ is a martingale. Now  note that 
\beaa
  e_s \cd \big( b^\sfy(s,X^\sfy_s)- \tilde b^\sfy(s, X'^\sfy_s)\big)
 \le 1_{\{ r_s\ge \e\}} r_s \k(r_s) + 1_{\{r_s<\e\}} C \e + \g \big(\ol\cW_1(\bp_s, \bp'_s) + \cW_1(\pi_s(\cd|\sfy), \pi'_s(\cd|\sfy)) \big).
\eeaa
Further, since $f'' \le 0$ and $\rc(\Si^\sfy_s) = 1$ whenever $r_s \ge\e$, we have
\beaa
f(r_t) - f(r_0) 
&\le& \int_0^t \Big( 1_{\{ r_s\ge \e\}}\big( f'(r_s) r_s \k(r_s)  + 2 \sigma^2 f''(r_s)\big)  +  1_{\{ r_s < \e\}} C \e \\
        && \q\q\q\q\q\q\q + \g f'(r_s) \big(\ol\cW_1(\bp_s, \bp'_s) + \cW_1(\pi_s(\cd|\sfy), \pi'_s(\cd|\sfy)) \big)\Big) ds  + M_t.
\eeaa
By taking expectation on both sides, we obtain
\beaa
&&\int \dbE \Big[f(r_t) - f(r_0)\Big] \sfm(d\sfy) \\
&\le& \int_0^t \Big( \int \dbE\Big[ 1_{\{ r_s\ge \e\}}\big( f'(r_s) r_s \k(r_s)  + 2 \sigma^2 f''(r_s)\big) +  C \e  \Big] \sfm(d\sfy)  + 2 \g  f'(r_s)\ol \cW_1(\bp_s, \bp'_s) \Big) ds\\
&\le & \int_0^t \Big( \int  \dbE\Big[  -c \si^2 f(r_s)1_{\{ r_s\ge \e\}}  +  C \e   \Big]  \sfm(d\sfy) + 2 \g  \ol \cW_1(\bp_s, \bp'_s)  \Big)ds\\
&\le & \int_0^t \Big( \int  \dbE\Big[  -c \si^2 f(r_s) +  (C + c\si^2) \e   \Big]  \sfm(d\sfy) + 2 \g  \ol \cW_1(\bp_s, \bp'_s)  \Big)ds.
\eeaa
The second  last inequality is due to \eqref{eq:fODE}.
Together with  \eqref{eq:fWlower} and \eqref{eq:fWupper}, we obtain
\beaa
\frac{\f(R_1)}{2}e^{c\sigma^2 t} \ol\cW_1 \big(\bp_t,\bp'_t \big) - \ol\cW_1\big(\bp_0,\bp'_0\big)
&\le & e^{c\si^2t}\int \dbE \big[f(r_t) \big] \sfm(d\sfy)-\int \dbE \big[ f(r_0)\big] \sfm(d\sfy)\\
&\le& \int_0^t e^{c\sigma^2 s} \big( ( C + c \sigma^2 ) \e  +2 \g  \ol \cW_1(\bp_s, \bp'_s) \big)ds .
\eeaa
This holds true for all $\e>0$, so finally we obtain \cref{eq:contractionresult} by Gronwall's inequality.
\end{proof}

\subsection{One player case}

Throughout this subsection, we suppose that the assumptions in \cref{thm:GD} hold true. Recall the drift function $b^\sfy$ defined in \cref{eq:drift} with $n=1$, i.e.
\bee
b^\sfy(t,x):=- \nabla_{x} \frac{\d F}{\d \nu}(\bp_t, x, \sfy).
\eee
Under the assumption of \cref{thm:GD} the function $b^\sfy$ is continuous in $(t,x)$ and $C^3$ in $x$ for all $t\in [0,T]$. Due to a classical regularity result in the theory of linear PDEs (see e.g. \cite[p.14-15]{JKO98}), we obtain the following result.
\begin{lemma}\label{lem:FP}
Under the assumption of \cref{thm:GD},  the marginal laws $(\pi_t(\cd | \sfy))_{t\ge 0}$ of the solution to \cref{eq:MFL_sys}  are weakly continuous solutions  to the Fokker-Planck equations:
\bea\label{eq:FoP}
\partial_t \nu = \nabla_x \cd (-b^\sfy \nu  + \frac{\si^2}{2} \nabla_x \nu) \q\mbox{for}\q \sfy\in \dbY.
\eea
In particular, we have that $(t,x)\mapsto \pi_t(x|\sfy)$ belongs to $C^{1,2}\big((0,\infty)\times \dbR^N)\big)$.
\end{lemma}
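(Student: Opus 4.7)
The plan is to derive the Fokker--Planck equation \cref{eq:FoP} in weak form via an It\^o argument, and then upgrade to classical $C^{1,2}$ regularity by invoking standard parabolic PDE theory.

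First, I would observe that since $Y\sim\sfm$ is independent of the driving Brownian motion $W$ in \cref{eq:MFL}, a regular conditional probability argument identifies $\pi_t(\cdot\mid\sfy)$ as the time-$t$ marginal of the solution $X_t^\sfy$ to \cref{eq:MFL_sys} with parameter $\sfy$ frozen. Applying It\^o's formula to $\phi(X_t^\sfy)$ for test functions $\phi\in C_c^\infty(\dbR^N)$ and taking expectations produces, integrated against $\phi$, exactly the weak formulation of \cref{eq:FoP}; the Lipschitz bound on $\nabla_x \frac{\d F}{\d \nu}$ together with \cref{eq:uniform0} ensures that the drift is $\pi_u(\cdot\mid\sfy)$-integrable on any bounded time interval, and that the stochastic integral is a true martingale. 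Weak continuity of $t\mapsto \pi_t(\cdot\mid\sfy)$ is inherited from the $L^p$-continuity of $X^\sfy$ granted by \cref{thm:well}.

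Next I would verify that the drift $b^\sfy(t,x) = -\nabla_x \frac{\d F}{\d \nu}(\bp_t, x, \sfy)$ is smooth enough for classical parabolic theory to apply. By the additional hypotheses of \cref{thm:GD}, for each $\sfy$ the map $(\bp, x)\mapsto \nabla_x \frac{\d F}{\d \nu}(\bp, x, \sfy)$ is jointly continuous, $x\mapsto b^\sfy(t,x)$ is $C^3$, and joint continuity also holds for $\nabla_x^2 \frac{\d F}{\d \nu}$. Combined with the $\cW_p$-continuity of $t\mapsto \bp_t$ (from \cref{thm:well}; note that $C_0 = 0$ in the one-player case, so the unique solution actually lies in $C_p([0,T],\Pi)$), this forces $b^\sfy$ to be continuous in $t$ and $C^3$ in $x$, with spatial derivatives locally bounded on $(0,\infty)\times\dbR^N$.

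Finally, \cref{eq:FoP} is a linear parabolic equation in divergence form with coefficients that are continuous in $t$ and smooth in $x$. Classical interior regularity (Schauder/Aronson-type estimates, as invoked in \cite[pp.~14--15]{JKO98}) then furnishes a density $(t,x)\mapsto \pi_t(x\mid\sfy)$ in $C^{1,2}((0,\infty)\times\dbR^N)$ that satisfies the equation classically, and uniqueness of weak solutions of the Fokker--Planck equation with locally Lipschitz drift identifies this smooth density with the weak solution constructed via It\^o. The only real obstacle is bookkeeping: one must verify that the local regularity of the drift matches the hypotheses of whichever parabolic regularity theorem is cited, which is essentially routine under the assumptions of \cref{thm:GD}.
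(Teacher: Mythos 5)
Your proposal is correct and follows essentially the same route as the paper: the paper's own argument simply notes that under the hypotheses of \cref{thm:GD} the drift $b^\sfy$ is continuous in $(t,x)$ and $C^3$ in $x$, and then invokes the classical linear parabolic regularity result of \cite[p.14--15]{JKO98} to conclude. Your additional steps (the It\^o derivation of the weak formulation and the identification of the weak and classical solutions) are details the paper leaves implicit, and your observation that $C_0=0$ in the one-player case so that $t\mapsto\bp_t$ lies in $C_p([0,T],\Pi)$ is exactly the point the paper relies on for continuity of $b^\sfy$ in $t$.
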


The following results can be proved with the same argument as in Lemma 5.5-5.7 in \cite{HKR19}, so the proof is omitted.

\begin{lemma}\label{lem:finit_ent}
 Fix a $\sfy\in \dbY$ and assume $ \dbE\big[ |X^\sfy_0|^2 \big] < \infty $, where we recall the $X^\sfy$ defined in \cref{eq:MFL_sys}. Denote by $ \dbQ^{\sigma}_\sfy$ the scaled Wiener measure\footnote{Let $B$ be the canonical process of the Wiener space and $\dbQ$ be the Wiener measure, then the scaled Wiener measure $ \dbQ^{\sigma}: = \dbQ\circ (\si B)^{-1} $.} with initial distribution $ \pi_0(\cd|\sfy)$ and by $(\cF_t)_{t\ge 0}$ the canonical filtration of the Wiener space. Then
 \begin{enumerate}
 	\item[{\rm (i)}] for any finite horizon $T>0$, the law of the solution to \cref{eq:MFL}, $ \mathbf{\Pi}(\cd|\sfy):={\rm Law}(X^\sfy)$, is equivalent to $ \dbQ^{\sigma}_\sfy$  on $ \cF_T $ and the relative entropy
 	\bee
 		\int \ln \Big( \frac{d\mathbf{\Pi}(\cd|\sfy)}{d\dbQ^{\sigma}_\sfy}\Big|_{\cF_T} \Big) d \nu_t 
		= \dbE\Big[ \int_0^T \big| b^\sfy(t, X^\sfy_t) \big|^2 dt \Big] 
		<+\infty.
 	\eee
 	
 	\item[{\rm (ii)}] the marginal law $ \pi_t(\cd|\sfy)$ admits a density s.t. $\pi_t(\cd|\sfy)>0 $ and $ H\big(\pi_t(\cd|\sfy)\big)<+\infty $ for $t>0$.
	
	\item[{\rm (iii)}]  the function $\ln \pi_t(x|\sfy)$ is continuous differentiable in $x$ for $t>0$, and for any $t_0\in (0,t]$ it satisfies 
	\bee
 			\nabla_x \ln \pi_t(x|\sfy) = -\frac1{t_0}\dbE\left[ \int_{0}^{t_0}\big(1-s \nabla_x b^\sfy(s,X^\sfy_{t-t_0+s})\big)dW^{t-t_0}_s \Big| X^\sfy_t=x \right],
 		\eee
 		where $ W^{t-t_0}_s := W_{t-t_0+s} - W_{t-t_0} $ and $W$ is the Brownian motion in \cref{eq:MFL_sys}.
In particular,  for any $ t_*>0 $ we have 
\beaa
C:=\sup_{s\ge t_*}\int_{\dbR^N } \big| \nabla_x \ln \pi_s(x|\sfy)  \big|^2 \pi_s(x|\sfy)  dx <+\infty,
\eeaa
and $C$ only depends on $t_*$  and the Lipschitz constant of $\nabla_x b^\sfy$ with respect to $ x $. 
	
	\item[{\rm (iv)}] we have the estimates
\beaa
		&\int_{\dbR^N} |\nabla_x \ln \pi_t(x|\sfy) | dx <+\infty, \q \int_{\dbR^N}|x\cdot\nabla_x \ln \pi_t(x|\sfy)| dx<+\infty\q\mbox{for all}\q t>0,\\
		&\mbox{and}\q \int_t^{t'} \int_{\dbR^N}|\D_{xx} \pi_s(x|\sfy)| dx ds <+\infty\q\mbox{for all}\q t'>t>0,
\eeaa
	and together with the integration by parts we obtain for all $t'>t>0$
\bea\label{lem:IPP}
		&\int_{\dbR^N} \D_{xx}  \frac{\d F}{\d \nu}(\bp_t, x, \sfy) \pi_t(x|\sfy) dx 
		= -\int_{\dbR^N} \nabla_{x} \frac{\d F}{\d \nu}(\bp_t, x, \sfy) \cd \nabla_x \pi_t(x|\sfy)dx,\\
		&\int_t^{t'}\int_{\dbR^N}\D_{xx} \big(\ln \pi_s(x|\sfy) \big)\pi_s(x|\sfy) dxds = -\int_t^{t'} \int_{\dbR^N}\left|  \nabla_x \ln \pi_s(x|\sfy) \right|^2 \pi_s(x|\sfy) dxds.\notag
\eea
 \end{enumerate}
 \end{lemma}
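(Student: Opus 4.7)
The four items of \cref{lem:finit_ent} can be obtained via the standard pipeline of \cite[Lemma 5.5--5.7]{HKR19}: a Girsanov step controlling the trajectorial relative entropy, combined with a Bismut--Elworthy--Li (BEL) integration-by-parts on a short time window for the marginal log-density. The key observation is that, once the flow $(\bp_t)_{t\ge 0}$ has been frozen via \cref{thm:well}, the driver $b^\sfy(t,x) = -\nabla_x\tfrac{\d F}{\d \nu}(\bp_t,x,\sfy)$ is a genuine function of $(t,x)$, so classical time-inhomogeneous SDE technology applies; the McKean--Vlasov nature of the system plays no role beyond supplying the assumed uniform regularity of $b^\sfy$.

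\textbf{Steps 1--2 (items (i)--(ii)).} First I would derive moment estimates $\sup_{t\in[0,T]}\dbE[|X^\sfy_t|^{2q}]<\infty$ for every $q\ge 1$ by applying It\^o's formula to $|X^\sfy_t|^{2q}$: the dissipativity \cref{assum:dissipative} provides a coercive term for large $|x|$, while the $p$-polynomial growth in \cref{assum:base} controls the complementary region, and Gronwall closes the estimate starting from $\dbE[|X^\sfy_0|^2]<\infty$. Hence $\dbE[\int_0^T |b^\sfy(t,X^\sfy_t)|^2 dt]<\infty$, Novikov's condition holds on $[0,T]$, and Girsanov yields the equivalence $\mathbf{\Pi}(\cd|\sfy)\sim \dbQ^\sigma_\sfy$ on $\cF_T$ together with the entropy identity of (i) (up to the factor $\tfrac{1}{2\sigma^2}$). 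The marginal $\pi_t(\cd|\sfy)$ then inherits equivalence with the strictly positive Gaussian marginal of $\dbQ^\sigma_\sfy$, yielding $\pi_t(\cd|\sfy)>0$; finiteness of $H(\pi_t(\cd|\sfy))$ follows from the data-processing inequality applied to the trajectorial entropy of (i).

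\textbf{Step 3 (item (iii)).} This is the delicate step. For fixed $t>0$ and $t_0\in(0,t]$, I would condition on $\cF_{t-t_0}$ and apply BEL on the window $[t-t_0,t]$. On this window the SDE is non-degenerate with $C^3$ drift, so the Malliavin derivative $D_s X^\sfy_t$ and the first variation process $J_{s,t}$ (satisfying $dJ = \nabla_x b^\sfy(t,\cd)J\,dt$ with $J_{s,s}=\mathrm{Id}$) are well defined and $D_s X^\sfy_t = \si J_{s,t}$. For $\varphi\in C^\infty_c(\dbR^N)$, writing
\begin{equation*}
-\int \nabla_x\varphi(x)\,\pi_t(x|\sfy)\,dx = -\dbE\big[\nabla_x\varphi(X^\sfy_t)\big]
\end{equation*}
and transferring the spatial derivative onto a Malliavin weight along the deterministic direction $h(s)=s/t_0$, one obtains (thanks to $J_{s,t} = \mathrm{Id} - s\nabla_x b^\sfy + o(\cd)$ at first order) the stated stochastic-integral representation after conditioning on $X^\sfy_t=x$ and identifying with $\nabla_x\ln\pi_t(x|\sfy)\,\pi_t(x|\sfy)$. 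The uniform $L^2(\pi_s)$ bound for $s\ge t_*$ follows from Jensen and the It\^o isometry applied with $t_0=t_*$, using only the Lipschitz constant of $\nabla_x b^\sfy$ in $x$, which is uniform in $(\bp,\sfy)$ under our assumptions.

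\textbf{Step 4 (item (iv)) and main obstacle.} The integrability claims for $|\nabla_x\ln\pi_t|$ and $|x\cd\nabla_x\ln\pi_t|$ follow by inserting the pointwise formula of (iii) into $\int(\cd)\pi_t\,dx$ and combining Cauchy--Schwarz with the moment bounds of Step 1. Local integrability of $\D_{xx}\pi_s$ on $[t,t']$ is read off the Fokker--Planck equation \cref{eq:FoP}, $\tfrac{\si^2}{2}\D_{xx}\pi_s = \partial_s\pi_s + \nabla_x\cd(b^\sfy\pi_s)$, bounding each term via \cref{lem:FP} and Step 1. The two identities of \cref{lem:IPP} are then obtained by a truncation-at-radius-$R$ argument, the boundary terms vanishing as $R\to\infty$ by the integrability just established. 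The principal difficulty throughout is Step 3: turning a Malliavin representation into a bona fide pointwise formula for $\nabla_x\ln\pi_t(x|\sfy)$ requires a careful conditional-expectation argument, but since the window $[t-t_0,t]$ is a classical smooth non-degenerate diffusion, no new ideas beyond those of \cite{HKR19} arise.
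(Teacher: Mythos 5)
Your outline is correct and follows exactly the route the paper intends: the paper omits the proof of \cref{lem:finit_ent} entirely, stating only that it "can be proved with the same argument as in Lemma 5.5--5.7 in \cite{HKR19}", and that argument is precisely your pipeline (moment bounds and Girsanov for (i)--(ii), the Bismut--Elworthy--Li representation on a short window for (iii), and the resulting integrability plus truncated integration by parts for (iv)), made applicable here because freezing the flow $(\bp_t)$ via \cref{thm:well} turns the drift into an ordinary function of $(t,x)$. Your parenthetical remarks (the missing factor $\tfrac{1}{2\si^2}$ in (i) and reading the first two integrals in (iv) against $\pi_t\,dx$) correctly identify what are evidently typos in the statement rather than gaps in your argument.
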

 
 The proof of \cref{thm:GD} is based on the previous lemma and  It\^o calculus.
 
 \ms
 
\begin{proof}[Proof of \cref{thm:GD}] \q
It follows from  the It\^o-type formula \cite[Theorem 4.14]{Carmona+Rene} and  \eqref{lem:IPP} that
\bea
\mathrm{d}F(\bp_t) 
&=& \int_{\dbR^N} \left(- \big|\nabla_x \frac{\d F}{\d \nu}(\bp_t,\bar x) \big|^2 + \frac{\sigma^2}{2}{\rm Tr}( \nabla^2_x \frac{\d F}{\d \nu}(\bp_t,\bar x))\right) \bp_t (d\bar x) d t \nonumber \\
&=& \int_{\dbR^N} \left(-|\nabla_x \frac{\d F}{\d \nu}(\bp_t, \bar x)|^2 - \frac{\sigma^2}{2}\nabla_x \frac{\d F}{\d \nu}(\bp_t,\bar x)\cd \frac{\nabla_x \pi_t(x|\sfy)}{\pi_t(x|\sfy)}\right) \bp_t(dx)d t. \label{FreeEnergyEq01}
\eea	
On the other hand, by It\^o's formula and  the Fokker-Planck equation \eqref{eq:FoP}, we have
\bee
 \mathrm{d} \log\pi_t(X_t|\sfy) = \left(\sigma^2\nabla_x \cd \frac{\nabla_x \pi_t(X_t|\sfy)}{\pi_t(X_t|\sfy)} + \frac{\si^2}{2}\left| \frac{\nabla_x \pi_t(X_t|\sfy)}{\pi_t(X_t|\sfy)}\right|^2 - \nabla_x\cdot \nabla_x\frac{\d F}{\d\nu}(\bp_t, X_t, \sfy) \right) dt+ \mathrm{d} M_t .
\eee
where $M$ is a martingale on $[t,T]$ for any $0<t<T$. By taking expectation on both sides and using  \eqref{lem:IPP}, we obtain for $t>0$:
\bea	\label{FreeEnergyEq02}
\mathrm{d}H(m_t) 
&=&\dbE \left[ - \frac{\si^2}{2}\left|\frac{\nabla_x \pi_t(X_t|\sfy)}{\pi_t(X_t|\sfy)} \right|^2 +  \nabla_x\frac{\d F}{\d\nu}(\bp_t, X_t, \sfy)\cd \frac{\nabla_x \pi_t(X_t|\sfy)}{\pi_t(X_t|\sfy)} \right]\mathrm{d}t \notag \\
&=& \int_{\dbR^N} \left( - \frac{\si^2}{2}\left|  \frac{\nabla_x \pi_t(x|\sfy)}{\pi_t(x|\sfy)} \right|^2 - \nabla_x\frac{\d F}{\d\nu}(\bp_t, x, \sfy) \cd \left( \frac{\nabla_x \pi_t(x|\sfy)}{\pi_t(x|\sfy)}\right) \right)\bp(\bar x) dt.
\eea
Summing up equations \eqref{FreeEnergyEq01} and \eqref{FreeEnergyEq02}, we obtain \cref{eq:GD}.
\end{proof}

 \ms

   Let $ (\bp_t)_{t\ge 0} $ be the flow of marginal laws of the solution of \cref{eq:MFL} given an initial law $ \bp_0$. Define a dynamic system $ \cS(t)\left[\bp_0\right] := \bp_t $. 
Due to the result of \cref{thm:GD}, we can view the function $V$ as a Lyapunov function of the dynamic system $(\cS_t)_{t\ge 0}$, and then it is natural to prove \cref{thm:invariant} using LaSalle's invariance principle (see the following \cref{prop:Lasalle}). However, $V^\si$ is not continuous (only lower-semicontinuous), in particular, the mapping $t\mapsto V^\si(\bp_t)$ is not a priori continuous at $+\infty$, which makes the proof non-trivial. Here we follow the strategy first developed in \cite{HRSS19} to overcome the difficulty. 
 Define the $ \omega $-limit set:
\begin{equation*}
	\omega(\bp_0) := \Big\{ \bp\in\Pi:\q \text{there exists }t_n\rightarrow+\infty\text{ such that } \cW_2\left(\cS(t_n)\left[\bp_0\right], \bp\right)\rightarrow 0  \Big\}.
\end{equation*}

\begin{lemma}
Assume that   $ \bp^0\in \Pi\cap\cP_q(\bar\dbR^N)$ with $q\ge 2$. Then for the solution  to the MFL system \cref{eq:MFL} we have
\be\label{eq:uniformmoment}
\sup_{t\ge 0} \int_{\bar\dbR^N} |\bar x|^q  \bp_t(d \bar x) <\infty.
\ee
\end{lemma}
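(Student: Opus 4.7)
The plan is to reduce the bound on $\int |\bar x|^q \bar\pi_t(d\bar x)$ to a bound on the $x$-moment alone, since the $\sfy$-marginal of $\bar\pi_t$ is frozen at $\sfm$ for every $t$ by construction of the MFL system. Using the elementary inequality $|\bar x|^q \leq C_q(|x|^q + |\sfy|^q)$ together with $\int |\sfy|^q \sfm(d\sfy) \leq \int |\bar x|^q \bar\pi_0(d\bar x) < \infty$, it suffices to bound $\int \mathbb{E}\big[|X^\sfy_t|^q\big] \sfm(d\sfy)$ uniformly in $t \geq 0$.

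For a fixed $\sfy \in \dbY$, I would apply It\^o's formula to the test function $\phi(x) = (1+|x|^2)^{q/2}$ (which is $C^\infty$ and comparable to $|x|^q$ at infinity, avoiding any regularity issue at the origin) along the SDE in \cref{eq:MFL_sys}. The key estimate is a uniform Lyapunov bound on the generator $\mathcal{L}^\sfy \phi = \nabla \phi \cdot b^\sfy + \tfrac{\si^2}{2}\Delta \phi$. The dissipativity assumption \cref{assum:dissipative}, combined with the uniform linear growth of $\nabla_x \frac{\d F}{\d \nu}$ in $x$ (which follows from the Lipschitz condition in \cref{thm:well} together with \cref{eq:uniform0}), yields the pointwise bound
\begin{equation*}
x \cdot b^\sfy(t,x) \leq -\e\, |x|^2 + K,
\end{equation*}
with $K$ independent of $(t, x, \sfy)$: for $|x| \geq R$ (the threshold from the dissipativity assumption) this is immediate, and on $\{|x| \leq R\}$ one simply absorbs the bounded cross term $x\cdot b^\sfy$ into the constant. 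A direct computation of $\nabla \phi$ and $\Delta \phi$ then gives
\begin{equation*}
\mathcal{L}^\sfy \phi(x) \leq q(1+|x|^2)^{q/2-1}\Big(-\e\, |x|^2 + K + \tfrac{\si^2}{2}(N+q-2)\Big),
\end{equation*}
and a Young-type inequality converts this to $\mathcal{L}^\sfy \phi(x) \leq -\l \phi(x) + C$ for some $\l>0$ and $C$ independent of $\sfy$.

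Taking expectation after the standard localisation $\tau_M := \inf\{t: |X^\sfy_t|\geq M\}$ of the martingale part and sending $M\to\infty$ (justified by the strong existence from \cref{thm:well} that guarantees $\mathbb{E}[\phi(X^\sfy_t)]<\infty$ on any finite horizon), this pointwise inequality promotes to the ODE inequality $\tfrac{d}{dt}\mathbb{E}[\phi(X^\sfy_t)] \leq -\l \mathbb{E}[\phi(X^\sfy_t)] + C$. Gr\"onwall's lemma then yields $\sup_{t\geq 0}\mathbb{E}[\phi(X^\sfy_t)] \leq \mathbb{E}[\phi(X^\sfy_0)] + C/\l$, and integrating against $\sfm(d\sfy)$ gives the desired uniform bound \cref{eq:uniformmoment}.

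The main delicate point is ensuring the constants $\e, K, \l, C$ in the Lyapunov estimate are genuinely independent of $\bp$ and $\sfy$: this is where the combination of the uniform dissipativity in \cref{assum:dissipative} and the global linear growth of $\nabla_x\frac{\d F}{\d \nu}$ inherited from \cref{thm:well} (needed to handle the compact region $\{|x|\leq R\}$) is essential. The localisation argument to pass from the stochastic It\^o expansion to an expectation inequality is routine but worth stating explicitly, since without a priori integrability of $\phi(X^\sfy_t)$ one cannot directly take expectations.
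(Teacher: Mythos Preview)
Your argument is correct and follows essentially the same Lyapunov--Gr\"onwall strategy as the paper: apply It\^o's formula to a $q$-th power functional of $X^\sfy$, use the dissipativity \cref{assum:dissipative} together with the uniform linear growth of $\nabla_x\frac{\d F}{\d\nu}$ to produce a differential inequality with constants independent of $(\bp,\sfy)$, and close by Gr\"onwall and integration in $\sfm$. The only cosmetic differences are that the paper works with $|x|^q$ directly and handles $q>2$ by an induction from the base case $q=2$, whereas you take the smooth surrogate $(1+|x|^2)^{q/2}$ and obtain the inequality $\mathcal L^\sfy\phi\le -\l\phi+C$ for all $q\ge 2$ in one stroke via a Young-type bound; your version is slightly more careful about regularity at the origin and about localisation, but the underlying idea is identical.
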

\begin{proof}
By It\^o formula, we obtain
\bee
d |X^\sfy_t|^q = q | X^\sfy_t |^{q-2}\left(- 2 X^\sfy_t \cd \frac{\d F}{\d \nu}(\bp_t, X^\sfy_t , \sfy ) +\si^2q(q-2+N) \right) dt + \si q|X^\sfy_t|^{q-2}X^\sfy_t \cd dW_t.
\eee
By the linear growth assumption on $\frac{\d F}{\d \nu}$ and the dissipative condition \cref{assum:dissipative}, there is a constant $M$ such that
\beaa
d |X^\sfy_t|^q  &\le & q | X^\sfy_t |^{q-2}\left( C - \e |X^\sfy_t|^2 1_{\{|X^\sfy_t|\ge M\}} \right) dt + \si q|X^\sfy_t|^{q-2}X^\sfy_t \cd dW_t\\
			&\le & q | X^\sfy_t |^{q-2}\left( (C+\e M^2) - \e |X^\sfy_t|^2  \right) dt + \si q|X^\sfy_t|^{q-2}X^\sfy_t \cd dW_t.
\eeaa
In particular, the constant $C$ above does not depend on $\sfy$. In case $q =2$, by taking expectation on both sides and using the Gronwall inequality, we obtain
\be\label{eq:tight}
\sup_{t\ge 0} \dbE[|X^\sfy_t|^q] \le C\big(1+  \dbE[|X^\sfy_0|^q]\big),
\ee
with $q=2$.
For $q>2$, a similar inequality follows from the induction. The desired result \cref{eq:uniformmoment} follows from integrating with respect to $\sfm$ on both sides of the inequality. 
\end{proof}

\begin{proposition}[Invariance Principle]\label{prop:Lasalle}
	Assume that   $ \bp^0\in \Pi\cap\cP_q(\bar\dbR^N)$ with $q>2$. Then the set $ \omega(\bp_0) $ is non-empty, compact and invariant, that is
	\begin{enumerate}
		\item[{\rm (a)}] for any $ \bp\in\omega(\bp_0) $, we have $ S(t)\left[\bp\right]\in\omega(\bp_0) $ for all $ t\ge 0 $;
		
		\item[{\rm (b)}] for any $ \bp\in\omega(\bp_0) $ and  $t\ge 0$, there exists $ \bp'\in\omega(\bp_0) $ such that $ S(t)\left[\bp'\right]=\bp $.
	\end{enumerate}
\end{proposition}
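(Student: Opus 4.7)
The plan is to apply a LaSalle-type argument on the metric space $(\cP_2(\bar\dbR^N)\cap \Pi, \cW_2)$. The two ingredients needed are (i) $\cW_2$-relative compactness of the orbit $\{\cS(t)[\bp_0]\}_{t\ge 0}$, and (ii) $\cW_2$-continuity of the map $\bp\mapsto \cS(t)[\bp]$ for each $t\ge 0$. Once these are in hand, non-emptiness, compactness, and both invariance properties of $\omega(\bp_0)$ follow by standard arguments.

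For (i), the preceding moment estimate yields $\sup_{t\ge 0}\int |\bar x|^q \bp_t(d\bar x)<\infty$ for some $q>2$ (guaranteed by the hypothesis $\bp_0\in\cP_q$). Markov's inequality then gives tightness and hence weak relative compactness, while uniform integrability of $|\bar x|^2$ under a uniform $q$-th moment bound with $q>2$ upgrades weak convergence along subsequences to $\cW_2$-convergence. Therefore every sequence $t_n\to\infty$ has a $\cW_2$-convergent subsequence, so $\omega(\bp_0)\neq\emptyset$; and being the intersection of $\cW_2$-closed subsets of a $\cW_2$-compact set, $\omega(\bp_0)$ is itself $\cW_2$-compact.

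For (ii), I revisit the coupling used in the proof of \cref{thm:well}: driving two MFL systems with initial laws $\bp_0,\bp'_0\in\Pi\cap\cP_2$ by the same Brownian motion and rerunning the Gronwall argument while retaining the extra initial-data term yields
\begin{equation*}
  \sup_{s\in[0,t]} \ol\cW_2\big(\cS(s)[\bp_0],\cS(s)[\bp'_0]\big) \le C(t)\, \ol\cW_2(\bp_0,\bp'_0)
\end{equation*}
for every finite $t>0$. A synchronous coupling along the shared $\sfm$-marginal shows $\cW_2\le\ol\cW_2$ on $\Pi$, so this delivers $\cW_2$-continuity of $\cS(t)$ for every $t>0$.

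With (i) and (ii) established, the invariance properties follow. For (a), if $\cS(t_n)[\bp_0]\to\bp$ in $\cW_2$, then the semigroup property and (ii) give $\cS(t+t_n)[\bp_0]=\cS(t)\big[\cS(t_n)[\bp_0]\big]\to \cS(t)[\bp]$; since $t+t_n\to\infty$, this shows $\cS(t)[\bp]\in\omega(\bp_0)$. For (b), choose $t_n\to\infty$ with $\cS(t_n)[\bp_0]\to\bp$; for $n$ large, the elements $\cS(t_n-t)[\bp_0]$ lie in the precompact orbit, so along a subsequence they converge in $\cW_2$ to some $\bp'\in \omega(\bp_0)$, and $\cW_2$-continuity of $\cS(t)$ then forces $\cS(t)[\bp']=\bp$. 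The main obstacle is (ii)---a quantitative stability estimate with respect to the initial law---but the Lipschitz hypotheses on $\nabla_{x^i}\frac{\d F^i}{\d\nu}$ already invoked in \cref{thm:well} propagate through Gronwall's inequality without any new technical ingredient.
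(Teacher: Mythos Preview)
Your proposal is correct and follows essentially the same approach as the paper: the paper's proof simply records that (i) the orbit lies in a $\cW_2$-compact set by the uniform $q$-moment estimate, (ii) $\bp_0\mapsto\cS(t)[\bp_0]$ is $\cW_2$-continuous by stability in the initial law (the Gronwall argument from \cref{thm:well}), and (iii) $t\mapsto\cS(t)[\bp_0]$ is $\cW_2$-continuous, and then defers to the standard LaSalle invariance-principle argument. You have spelled out that standard argument explicitly, which is fine; the only minor difference is that in the one-player setting $C_0=0$, so the stability estimate from \cref{thm:well} gives $\cW_2$-continuity directly without needing the detour through $\ol\cW_2$.
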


\begin{proof}
Note that
\begin{itemize}
\item the mapping $\bp_0 \mapsto  \cS(t)\left[\bp_0\right]$ is $\cW_2$-continuous due to the stability on the initial law;
\item the mapping $t\mapsto  \cS(t)\left[\bp_0\right]$ belongs to $C_2\big(\dbR^+, \Pi\big)$, due to \cref{thm:well};
\item the set $\big\{ \cS(t)[\bp_0],~t\ge 0 \big\}$ belongs to a $\cW_2$-compact set, due to \cref{eq:uniformmoment}.
\end{itemize}
The rest follows the standard argument for Lasalle's invariance principle (see e.g. \cite[Theorem 4.3.3]{Henry81} or \cite[Proposition 6.5]{HRSS19}).
\end{proof}

\ms
\begin{proof}[Proof of \cref{thm:invariant}] \q
{\it Step 1.}\q Recall the set $\cI$ defined in \cref{eq:limitset}. We first prove the existence of a converging subsequence towards a  member in $\cI$. Since $\o (\bp_0)$ is $\cW_2 $-compact and $V$ is $\cW_2$-lower semicontinuous, there is $\bp^*\in \argmin_{\bp\in \o(\bp_0)} V(\bp)$. By the backward invariance (b), given $t>0$ there is $\nu\in \o(\bp_0)$ such that $\cS (t)[\nu] = \bp^*$. By \cref{thm:GD}, we have 
\bee
V\big(\cS(t+s)[\nu]\big) \le V\big(\cS(t)[\nu]\big) = V(\bp^*), \q\mbox{for all $s>0$}.
\eee
Further by the forward invariance (a), we know $\cS(t+s)[\nu] \in \o(\bp_0)$, and by the optimality of $\bp^*$ we obtain $V\big(\cS(t+s)[\nu]\big) = V(\bp^*)$. Again by \cref{thm:GD}, we get
\bee
0= \frac{d V\big(\cS(t)[\nu]\big) }{dt} 
=-  \int_{\bar\dbR^N} \left| \nabla_{x} \frac{\d F}{\d \nu}(\bp^*,  x,\sfy) + \frac{\si^2}{2} \nabla_{x}\ln\big(\pi^*(x |\sfy)\big) \right|^2  \bp^* (d\bar x).
\eee
Since $\bp^*=\cS (t)[\nu] $ is equivalent to ${\rm Leb}\times \sfm$ according to \cref{lem:finit_ent} (ii), we have $\bp^*\in \cI$. By the definition of $\o(\bp_0)$, there is a subsequence of $(\bp_t)$ converging towards $\bp^*$.

\ms

\no{\it Step 2 (a).}\q We first prove the result under the assumption (ii.a). Let $(\bp_{t_n})_n$ be a sequence converging to $\bp^*$ in $\cW_2$. 
Due to the estimate \cref{eq:tight} and the fact that $\dbY$ is countable, there is subsequence, still denoted by $(t_n)_n$ such that for each $\sfy\in \dbY$, $\pi_{t_n}(\cd|\sfy)$ converges to a probability measure $\pi^\sfy$ in $\cW_2$.
Then clearly $\pi^\sfy=\pi^*(\cd|\sfy)$ for $\sfm$-a.s. $\sfy$, and thus $\bp_{t_n}(\cd|\sfy)\rightarrow \bp^*(\cd|\sfy)$ in $\cW_2$ for $\sfm$-a.s. $\sfy$. Note that
\bee
\bp^*(x|\sfy)= C \exp\left( -\frac{2}{\si^2}\frac{\d F}{\d \nu}(\bp^*, x,\sfy)\right),
\eee
in particular, $\bp^*(\cd|\sfy)$ is log-semiconcave. By the HWI inequality (see \cite[Theorem 3]{OV00}) we have
\begin{equation}\label{HWI}
\int  \big(\ln \bp_{t_n}(x|\sfy) -\ln  \bp^*(x|\sfy) \big) \bp_{t_n}(dx|\sfy)
\le 
\cW_2 \big( \bp_{t_n}(x|\sfy)  , \bp^*(x|\sfy) \big) \left(\sqrt{\cI_n^\sfy} + C \cW_2 \big(\bp_{t_n}(x|\sfy)  , \bp^*(x|\sfy) \big) \right),
\end{equation}
where $\cI^\sfy_n$ is the relative Fisher information defined as
\beaa
\cI^\sfy_n &: =& \int \Big| \nabla_x \ln\bp_{t_n}(x|\sfy)  - \nabla_x \ln\bp^*(x|\sfy)   \Big|^2 \bp_{t_n}(dx|\sfy) \\
&=&   \int \Big| \nabla_x \ln\bp_{t_n}(x|\sfy)  +\frac{2}{\si^2}\nabla_x \frac{\d F}{\d \nu}(\bp^*, x,\sfy)  \Big|^2 \bp_{t_n}(dx|\sfy)  \\
&\le & 2 \int \big| \nabla_x \ln \bp_{t_n}(x|\sfy)  \big|^2 \bp_{t_n}(d x|\sfy) + C \Big(1+\int |x|^2 \bp_{t_n}(d x|\sfy) \Big),
\eeaa
where the last inequality is due to the linear growth of $\nabla_x \frac{\d F}{\d \nu}$ in $x$.
It follows from \cref{lem:finit_ent} (iii) that $\sup_{n, \sfy} \cI^\sfy_n<\infty$. Integrate both sides of \cref{HWI} with respect to $\sfm$, and obtain
\be\label{HWIest}
H\big(\bp_{t_n} \big|\bp^*\big) \le C \ol\cW_2(\bp_{t_n}, \bp^*)\Big(1+ \ol\cW_2(\bp_{t_n}, \bp^*) \Big).
\ee
The right hand side converges to $0$ as $n\rightarrow\infty$ by the dominated convergence theorem. Therefore,
\beaa
&& \limsup_{n\rightarrow\infty} H(\bp_{t_n}| {\rm Leb}\times \sfm) - H(\bp^*| {\rm Leb}\times \sfm)\\
& = & \limsup_{n\rightarrow\infty} \int \ln\left(\frac{\pi_{t_n}(x|\sfy)}{ \pi^*(x|\sfy)}\right) \bp_{t_n}(d\bar x)+ \int \ln\big(\pi^*(x|\sfy)\big) (\bp_{t_n} -\bp^*)(d\bar x) \\
 & =& \limsup_{n\rightarrow\infty}H\big(\bp_{t_n} \big|\bp^*\big)  ~\le~ 0,
\eeaa
where the last equality is due to the dominated convergence theorem and the last inequality is due to \cref{HWIest}.
Since $H$ is $\cW_2$-lower-semicontinuous, we have $ \lim\limits_{n\rightarrow\infty} H(\bp_{t_n}| {\rm Leb}\times \sfm) = H(\bp^*| {\rm Leb}\times \sfm) $. Together with the fact that $F$ is $\cW_2$-continuous, we have $\lim\limits_{t\rightarrow\infty} V(\bp_t) = V (\bp^*)$. Further by the $\cW_2$-lower-semicontinuity of $V$, we obtain 
\bee
V(\bp) \le \lim_{t'_n\rightarrow\infty} V(\bp_{t'_n})=V(\bp^*), \q\mbox{for all $\bp\in \o (\bp_0)$}.
\eee
Together with the optimality of $\bp^*$, we have $V(\bp) =V(\bp^*)$ for all $\bp\in \o (\bp_0)$. Finally by the invariant principle and \cref{eq:GD}, we conclude that $\o(\bp_0)\subset\cI$. 

\ms

\no{\it Step 2 (b).}\q Similarly we can prove the result under the assumption (ii.b). Let $(\bp_{t_n})_n$ be a sequence converging to $\bp^*$ in $\cW_2$. Note that
\bee
\bp^*(x, y)= C \exp\left( -\frac{2}{\si^2}\frac{\d F}{\d \nu}(\bp^*, x,\sfy)\right)\sfm(\sfy)
\eee
is log-semiconcave due to the assumption. Due to the HWI inequality, we have
\bee
\int  \big(\ln \bp_{t_n}(x|\sfy) -\ln  \bp^*(x|\sfy) \big) \bp_{t_n}(d\bar x) 
\le 
\cW_2 \big( \bp_{t_n} , \bp^* \big) \left(\sqrt{\cI_n} + C \cW_2 \big(\bp_{t_n} , \bp^* \big) \right),
\eee
where $\cI_n$ is the relative Fisher information defined as
\beaa
\cI_n &: =& \int \Big| \nabla_x \ln\bp_{t_n}(x|\sfy)  - \nabla_x \ln\bp^*(x|\sfy)   \Big|^2 \bp_{t_n}(d\bar x) \\
&=&   \int \Big| \nabla_x \ln\bp_{t_n}(x|\sfy)  +\frac{2}{\si^2}\nabla_x \frac{\d F}{\d \nu}(\bp^*, x,\sfy)  \Big|^2 \bp_{t_n}(d\bar x)  \\
&\le & 2 \int \big| \nabla_x \ln \bp_{t_n}(x|\sfy)  \big|^2 \bp_{t_n}(d \bar x) + C \Big(1+\int |x|^2 \bp_{t_n}(d \bar x) \Big).
\eeaa
Again by \cref{lem:finit_ent} (iii) we have $\sup_{n} \cI_n<\infty$. For the rest, we may follow the same lines of arguments in Step 2 (a) to conclude the proof.
\end{proof}

\bibliographystyle{siamplain}
\bibliography{references}
\end{document}